\documentclass[submission,copyright,creativecommons]{eptcs}
\providecommand{\event}{ACT 2026} %

\usepackage{iftex}

\ifpdf
  \usepackage{underscore}         %
  \usepackage[T1]{fontenc}        %
\else
  \usepackage{breakurl}           %
\fi

\usepackage[utf8]{inputenc}
\usepackage[T1]{fontenc}

\usepackage{microtype}
\usepackage[all]{nowidow}
\usepackage[lastparline]{impnattypo}
\usepackage{graphicx}

\usepackage{enumitem}

\usepackage{mathtools, amsthm, thmtools}
\usepackage{mathpartir}

\usepackage{float}

\usepackage{multicol}

\usepackage{centernot}

\usepackage[title]{appendix}

\usepackage{hyperref}
\hypersetup{
    colorlinks=true,       %
    linkcolor=blue,        %
    filecolor=blue,        %
    urlcolor=blue,         %
    citecolor=green,        %
    pdfborder={0 0 0}      %
}
\usepackage[nameinlink,capitalize,noabbrev]{cleveref}

\usepackage{babel}

\usepackage{amsmath, amssymb}

\numberwithin{equation}{section}

\usepackage{fonttable}
\usepackage{wasysym}

\usepackage{stackrel, stackengine, old-arrows}

\usepackage{mathrsfs, amsfonts, stmaryrd}

\usepackage{quiver}

\usepackage{xcolor}

\newcommand{\R}{\mathbb{R}}
\newcommand{\N}{\mathbb{N}}
\newcommand{\subto}{\twoheadrightarrow}
\newcommand{\fibto}{\twoheadrightarrow}
\newcommand{\isoto}{\overset{\sim}\to}

\newcommand{\biglens}[2]{
	 \begin{pmatrix}{\vphantom{f_f^f}#1} \\ {\vphantom{f_f^f}#2} \end{pmatrix}
}
\newcommand{\littlelens}[2]{
	 \begin{psmallmatrix}{\vphantom{f}#1} \\ {\vphantom{f}#2} \end{psmallmatrix}
}
\newcommand{\lens}[2]{
  \relax\if@display
	 \biglens{#1}{#2}
  \else
	 \littlelens{#1}{#2}
  \fi
}

\newcommand{\lensto}{\leftrightarrows}

\newcommand{\acat}[1]{\mathbf{#1}}
\newcommand{\thecat}[1]{\mathbf{#1}}

\newcommand{\Cert}{\mathbb{C}\mathbf{ert}}

\newcommand{\Lens}{\mathbb{L}\mathbf{ens}}
\newcommand{\CertLens}{{\Cert\Lens}}

\newcommand{\Moore}{\mathbb{M}\mathbf{oore}}
\newcommand{\ODE}{{\mathbb{O}\mathbf{DE}}}
\newcommand{\CertODE}{{\Cert\ODE}}
\newcommand{\Span}{{\mathbb{S}\mathbf{pan}}}

\newcommand{\Cat}{\mathcal{C}\mathsf{at}}

\newcommand{\Fib}{\mathcal{F}\mathsf{ib}}
\newcommand{\AdTriple}{\mathcal{A}\mathsf{dTp}}
\newcommand{\Dbl}{\mathcal{D}\mathsf{bl}}
\newcommand{\Tangency}{\mathcal{T}\mathsf{an}}
\newcommand{\lModr}{\ell\mathcal{M}\mathsf{od}_r}
\newcommand{\Mod}[1]{\mathcal{M}\mathsf{od}_{s, #1}}

\newcommand{\Set}{\thecat{Set}}

\newcommand{\Man}{\thecat{OpenEuc}}
\newcommand{\BunMan}{{(\Man_\bullet \simp \Man_\bullet)_\lip}}

\newcommand{\simp}{\ltimes}
\newcommand{\lip}{{\mathrm{lip}}}
\newcommand{\lexicographic}{{\mathrm{lexico}}}

\newcommand{\id}{\mathrm{id}}
\newcommand{\de}{\mathrm{d}}

\newcommand{\xto}[1]{\xrightarrow{#1}}

\newcommand{\parallelprod}{\parallel}
\DeclareMathOperator{\Lift}{\triangleright}

\newcommand{\K}{\mathcal{K}}

\theoremstyle{plain}
\newtheorem{theorem}{Theorem}[section]
\newtheorem{lemma}[theorem]{Lemma}
\newtheorem{corollary}[theorem]{Corollary}

\theoremstyle{definition}
\newtheorem{remark}[theorem]{Remark}
\newtheorem{construction}[theorem]{Construction}
\newtheorem{definition}[theorem]{Definition}
\newtheorem{example}[theorem]{Example}
\newtheorem{convention}[theorem]{Convention}

\usepackage{tikz}

\pgfdeclarelayer{edgelayer}
\pgfdeclarelayer{nodelayer}
\pgfsetlayers{edgelayer,nodelayer,main}

\usetikzlibrary{
  	cd,
  	math,
  	decorations.markings,
		decorations.pathreplacing,
  	positioning,
  	arrows.meta,
  	shapes,
		shadows,
		shadings,
  	calc,
  	fit,
  	quotes,
  	intersections,
    circuits,
    circuits.ee.IEC
  }

	\tikzcdset{arrow style=tikz, diagrams={>=To}}

\tikzset{
     oriented WD/.style={%
        every to/.style={out=0,in=180,draw},
        label/.style={
           font=\everymath\expandafter{\the\everymath\scriptstyle},
           inner sep=0pt,
           node distance=2pt and -2pt},
        semithick,
        node distance=1 and 1,
        decoration={markings, mark=at position \stringdecpos with \stringdec},
        ar/.style={postaction={decorate}},
        execute at begin picture={\tikzset{
           x=\bbx, y=\bby,
           }}
        },
     string decoration/.store in=\stringdec,
     string decoration={\arrow{stealth};},
     string decoration pos/.store in=\stringdecpos,
     string decoration pos=.7,
	 	 dot size/.store in=\dotsize,
	   dot size=3pt,
	 	 dot/.style={
			 circle, draw, thick, inner sep=0, fill=black, minimum width=\dotsize
	   },
     bbx/.store in=\bbx,
     bbx = 1.5cm,
     bby/.store in=\bby,
     bby = 1.5ex,
     bb port sep/.store in=\bbportsep,
     bb port sep=1.5,
     bb port length/.store in=\bbportlen,
     bb port length=4pt,
     bb penetrate/.store in=\bbpenetrate,
     bb penetrate=0,
     bb min width/.store in=\bbminwidth,
     bb min width=1cm,
     bb rounded corners/.store in=\bbcorners,
     bb rounded corners=2pt,
     bb small/.style={bb port sep=1, bb port length=2.5pt, bbx=.4cm, bb min width=.4cm,
bby=.7ex},
		 bb medium/.style={bb port sep=1, bb port length=2.5pt, bbx=.4cm, bb min width=.4cm,
bby=.9ex},
     bb/.code 2 args={%
        \pgfmathsetlengthmacro{\bbheight}{\bbportsep * (max(#1,#2)+1) * \bby}
        \pgfkeysalso{draw,minimum height=\bbheight,minimum width=\bbminwidth,outer
sep=0pt,
           rounded corners=\bbcorners,thick,
           prefix after command={\pgfextra{\let\fixname\tikzlastnode}},
           append after command={\pgfextra{\draw
              \ifnum #1=0{} \else foreach \i in {1,...,#1} {
                 ($(\fixname.north west)!{\i/(#1+1)}!(\fixname.south west)$) +(-
\bbportlen,0)
  coordinate (\fixname_in\i) -- +(\bbpenetrate,0) coordinate (\fixname_in\i')}\fi
              \ifnum #2=0{} \else foreach \i in {1,...,#2} {
                 ($(\fixname.north east)!{\i/(#2+1)}!(\fixname.south east)$) +(-
\bbpenetrate,0)
  coordinate (\fixname_out\i') -- +(\bbportlen,0) coordinate (\fixname_out\i)}\fi;
           }}}
     },
     bb name/.style={append after command={\pgfextra{\node[anchor=north] at
(\fixname.north) {#1};}}}
  }

\makeatletter
  {\refstepcounter{equation}\begin{mathpar}[#1]}%
  {\eqno{\hbox{\@eqnnum}}\end{mathpar}}
\makeatother

\title{Compositionality of Lyapunov functions\\via assume-guarantee reasoning}
\author{
Matteo Capucci
\institute{University of Strathclyde, UK}
\institute{Independent Researcher, IT\thanks{Supported by Advanced Research + Invention Agency grant MSAI-PRO-01.}}
\email{matteo.capucci@gmail.com}
\and
David Jaz Myers
\institute{Topos Institute, UK\thanks{Supported by Advanced Research + Invention Agency grant MSAI-PRO-14.}}
\email{davidjaz@topos.institute}
}

\def\titlerunning{Compositionality of Lyapunov functions via assume-guarantee reasoning}
\def\authorrunning{M. Capucci \& D. J. Myers}

\begin{document}
\maketitle

\begin{abstract}
    Assume-guarantee reasoning is a technique for compositional model checking in which system specifications are checked under certain assumptions on system parameters or inputs, and provide guarantees on observations of system state.
    We present a categorical framework for assume-guarantee reasoning for safety problems by viewing systems as \emph{lenses}, following our earlier work on the compositionality of generalized Moore machines. Generalized Moore machines include ordinary Moore machines, partially observable Markov (decision) processes, and systems of parameterized ODEs (control systems); our framework gives assume-guarantee reasoning specially adapted to each of these cases. In particular, we give a novel formulation of assume-guarantee reasoning for \emph{(local) input-to-state stability} ((L)ISS) Lyapunov functions on systems of parameterized ODEs.

    Our framework is categorically natural and straightforwardly compositional. A flavor of generalized Moore machine is determined by a \emph{tangency}: a fibration with a section. We show that symmetric monoidal loose right modules of assume-guarantee certified generalized Moore machines over symmetric monoidal double categories of certified wiring diagrams can be constructed 2-functorially from fibrations internal to the 2-category of tangencies.
\end{abstract}

\section{Introduction}

Model checking aims to verify that a dynamical system satisfies a \emph{specification} --- a predicate of its behaviors. When systems are built up modularly through coupling component subsystems, the resulting composite system can be much bigger and much more difficult to check than its components. This naturally suggests a divide-and-conquer approach to checking coupled systems: if we know each component system satisfies its spec, and we know that the conjunction of component specs implies the composite spec, then we can check the composite model modularly.

However, component systems are necessarily \emph{open} --- they can be coupled with other components. It is very unlikely that a component will satisfy its spec \emph{simpliciter}; to check a component, we must take for granted some conditions on its operating environment. This observation lead Pneuli to formulate his \emph{assume-guarantee reasoning} \cite{pneuliAssumeGuarantee} method for compositional model checking. In assume-guarantee reasoning, each component provides the \emph{assumption} it makes of its environment, as well as the \emph{guarantee} it promises on its observable behavior, provided the assumption is met. Assume-guarantee reasoning has gone on to be a central tool in the model checking toolkit \cite{PasareanuDwyerHuth1999, ViswanathanViswanathan2001, CobleighGiannakopoulouPasareanu2003, CobleighGiannakopoulouPasareanuBarringer2008, GuptaMcMillanFu2008, BobaruPasareanuGiannakopoulou2008}.

In this paper, we put forward a categorical algebra for assume-guarantee reasoning of state predicates on \emph{generalized Moore machines}. Generalized Moore machines include ordinary Moore machines \cite{mooreGedanken}, systems of parameterized ODEs, partially observable Markov decision processes, and more general coalgebras for endofunctors. We will follow the \emph{double operadic theory of systems} approach \cite{libkind2025doubleoperadictheorysystems} and describe symmetric monoidal loose right modules of assume-guarantee certified machines over symmetric monoidal double categories of certified \emph{wiring diagrams} (or more general \emph{coupling laws}). Our focus in this paper is on the abstract algebra of assume-guarantee reasoning; we look forward to future work on practical implementations which exploit this compositional algebra for actual model checking.

Our approach begins with the observation that assume-guarantee reasoning is, at least for certain specifications, highly reminiscent of the definition of a Moore machine. A Moore machine with action set $A$ and observation set $O$ consists of a set $S$ of states together with two functions: the \emph{view} $v : S \to O$ yielding an output symbol and the \emph{update} $u : S \times I \to S$ transitioning the state on receiving an input. Suppose we have a predicate $\varphi(s)$ telling us which states $s \in S$ are `safe'; our aim is to prove that when starting in a safe state, we will remain in a safe state so long as our input $i \in I$ always satsifies its assumptions $\alpha(i)$. We also provide a guarantee $\gamma(o)$ on output symbols $o \in O$, which holds under the assumption that we are in a safe state. In total, we analyze assume guarantee for specifications of safety predicates of state into a pair of implications:
\begin{equation}
\lens{u}{v} \vDash \lens{\varphi}{\varphi} \leftrightarrows \lens{\alpha}{\gamma} \iff
\begin{cases}
    \varphi(s) \wedge \alpha(i) \Rightarrow \varphi(u(s,i)), \\
    \varphi(s) \Rightarrow \gamma(v(s))
\end{cases}
\end{equation}
These have almost the same form as the functions $u : S \times I \to S$ and $v : S \to O$. In fact, we can see assume-guarantee certified Moore machines (for state safety specifications) as Moore machines \emph{in another category} --- this time, a category of sets equipped with a predicate, rather than just sets in the un-certified case.

This simple observation becomes useful when linked up with the categorical approach to Moore machines using \emph{lenses}, initiated by Spivak, Vasilakopoulou, and collaborators \cite{vagner2015algebrasopendynamicalsystems, schultz2019dynamicalsystemssheaves, bakirtzisCategoricalSemantics}. In this point of view, not only is a Moore machine viewed as a lens, but so are the \emph{wiring diagrams} \cite{vagner2015algebrasopendynamicalsystems} describing how machines are to be coupled. Machines may be coupled according to the wiring diagram by \emph{lens composition}. Lens composition applied to lenses in the category of subsets suffices to give the proof rule that if a wiring diagram $\lens{w^{\sharp}}{w}$ is itself certified, then composing by it preserves the certification of machines:

\begin{equation}
  \label{rule1}
  \tag{COMP}
  \inferrule{
  	\lens{w^{\sharp}}{w} \vDash \lens{\alpha_1 \wedge \cdots \wedge \alpha_n}{\gamma_1 \wedge \cdots \gamma_n}\leftrightarrows \lens{\alpha}{\gamma}
  	\and
  	\forall i,\,\,\lens{u_i}{v_i} \vDash \lens{\varphi_i}{\varphi_i} \leftrightarrows \lens{\alpha_i}{\gamma_i}
  }{
  	\lens{w^{\sharp}}{w} \circ \left( \lens{u_1}{v_1} \parallelprod \cdots \parallelprod \lens{u_n}{v_n} \right) \vDash \lens{\alpha}{\gamma}
  }
\end{equation}

In \cite{vagner2015algebrasopendynamicalsystems}, the compositionality of machines and wiring diagrams is organized into an \emph{algebra} of Moore machines over an \emph{operad} of wiring diagrams. In the subsequent work of the second-named author \cite{Jaz_Myers_2021, jaz-2021-book, libkind2025doubleoperadictheorysystems}, \emph{generalized} Moore machines are organized into a symmetric monoidal loose right modules over the symmetric monoidal double categories of lenses; this added double categorical direction includes the \emph{homomorphisms} of machines, such as traces and coarse grainings. We may then ask for another proof rule: if machine $S'$ is certified, and $S'$ coarse-grains (or simulates) $S$ via the homomorphism $\sigma:S \to S'$, then $S$ is also certified:

\begin{equation}
  \label{rule2}
  \tag{SUBST}
  \inferrule{
    \left(\sigma, \lens{f_{\sharp}}{f}\right) : \lens{u}{v} \rightrightarrows \lens{u'}{v'}
    \and
    \lens{u'}{v'} \vDash \lens{\varphi}{\varphi} \leftrightarrows \lens{\alpha}{\gamma}
  }{
    \lens{u}{v} \vDash \lens{\varphi \circ \sigma}{\varphi \circ \sigma} \leftrightarrows \lens{\alpha \circ f_{\sharp}}{\gamma \circ f}
  }
\end{equation}

Taking lenses in the category of subsets is not quite enough to handle this rule for the most general kind of maps considered in \cite{Jaz_Myers_2021, jaz-2021-book}. However, a slight generalization which we will shortly describe does suffice.

The main contention of \cite{libkind2025doubleoperadictheorysystems} is that the compositionality of various sorts of dynamical systems is well captured by the 2-algebraic structure of a symmetric monoidal loose right module (of systems) over a symmetric monoidal double category (of interfaces and coupling laws). We'll refer to these symmetric monoidal loose right modules as \emph{2-algebras of systems} for short. In this paper, we'll show that assume-guarantee certificates may be fibered over generalized Moore machines, \emph{compositionally}; that is, we'll show that 2-algebras of certified machines are fibered over 2-algebras of machines. The right action of the double category of certified lenses will correspond to Rule \ref{rule1}, while the fibration will give us Rule \ref{rule2}. In particular, we'll find a compositional algebra for assume-guarantee reasoning about Moore machines, POMDPs, and even local input-to-state stability (LISS) Lyapunov functions on parameterized systems of ODEs.

\medskip
\textbf{Related work.}
Assume-guarantee reasoning for Moore machines has been studied \cite{gumbergModelChecking, HenzingerQadeerRajamaniTasiran2002}, though not to our knowledge using categorical methods. Our approach to compositional Lyapunov functions relates to categorical Lyapunov theory \cite{ames2025categoricallyapunovtheoryi, ames2025categoricallyapunovtheoryii}.

\medskip
\textbf{Acknowledgements.}
We would like to thank Mirco Giacobbe, Diptarko Roy, Grigory Neustroev, David Corfield, and Eigil Rischel for helpful discussions. The second-named author would like to thank James Fairbanks for suggesting the task of theorizing compositional Lyapunov functions.

\section{Certified Moore machines as lenses}\label{sec:moore.machine}

Let's expand on the observation from the introduction that assume-guarantee certified Moore machines are Moore machines in a category of subsets.

\subsection{A review of generalized Moore machines as lenses}

A Moore machine, traditionally speaking, is a pair of functions $u : S \times A \to S$ and $v : S \to O$; we think of $O$ as the set of possible \emph{observations} which may be made of it, and $A$ as a menu of \emph{actions} which may be taken. The pair $\lens{A}{O}$ is called the \emph{interface} of the Moore machine. This is a \emph{deterministic} Moore machine; in a non-deterministic Moore machine, $u$ has signature $S \times A \to \mathcal{P}S$ into the power-set of $S$. A partially observable Markov process is a Moore machine, but the update is \emph{stochastic}: we have $u : S \times A \to DS$, where $D$ is a set of probability distributions. If we want a Markov \emph{decision} process, we can also return a distribution over rewards $u : S \times A \to D(\mathbb{R} \times S)$.

We will work with a generalization of the Moore machine notion which handles these and other examples. First, we generalize the interface by allowing the menu $A_o$ of actions to depend on the observation $o \in O$; this gives us an interface $\lens{A_o}{o : O}$ consisting of the set of observations and the family $A:O \to \Set$ of action sets depending on it. We will also suppose that, for every set $S$, we have a family of sets $T_s S$ of \emph{changes} that it is possible to make in the state $s \in S$. In this setting, a \emph{generalized Moore machine} is a pair of maps:
\begin{equation}
\begin{cases}
 u : (s \in S) \times A_{v(s)} \to T_s S \\
 v : S \to O
\end{cases}
\end{equation}
where $(s \in S) \times A_{v(s)}$ is the set of pairs $(s, a)$ with $a \in A_{v(s)}$. The \emph{view} $v : S \to O$ yields an observation of state, and the update $u : (s \in S) \times A_{v(s)}\to T_sS$ applies an action which is compatible with the current observation, yielding a change $u(s, a) \in T_s S$ to state $s$. In short, we write $\lens{u}{v} : \lens{T_sS}{s : S} \leftrightarrows \lens{A_o}{o : O}$ for the Moore machine as a whole.

Taking $T_s S := S$ and $A_o := A$ to be constant, we recover the usual, `basic' notion of Moore machine; we call an interface $\lens{A}{O}$ with constant action set a \emph{simple interface}. If $T_s S := \mathcal{P}S$, we recover a non-deterministic Moore machine; if $T_s S := D S$, we recover a partially observable Markov process. Similarly, if $F$ is any endofunctor, then we may define $T_s S := FS$; generalized Moore machines are then an \emph{open} variant of $F$-coalgebras (see e.g. \cite{RUTTEN20003}). We will see in the upcoming \cref{sec:cert-odes} that taking $T_s S$ to be the tangent space of a manifold $S$ gives parameterized systems of ordinary differential equations; but we will leave that example for later.

Moore machines are made to be coupled synchronously. The simplest synchronous coupling of Moore machines is the \emph{parallel product}, in which both machines run simultaneously without interacting. Given $\lens{u_1}{v_1} : \lens{T_{s_1}S_1}{s_1 : S_1} \leftrightarrows \lens{A_{1,o_1}}{o_1 : O_1}$ and $\lens{u_2}{v_2} : \lens{T_{s_2}S_2}{s_2 : S_2} \leftrightarrows \lens{A_{2,o_2}}{o_2 : O_2}$, their parallel product $\lens{u_1}{v_1} \parallelprod \lens{u_2}{v_2} : \lens{T_{(s_1, s_2)}(S_1 \times S_2)}{(s_1, s_2) : S_1 \times S_2} \leftrightarrows \lens{A_{1,o_1} \times A_{2,o_2}}{(o_1, o_2) : O_1 \times O_2}$ is defined by:
\begin{equation}
	\lens{u_1}{v_1} \parallelprod \lens{u_2}{v_2} := \begin{cases}
		((s_1, s_2), (a_1, a_2)) \mapsto \mu_T(u_1(s_1, a_1), u_2(s_2, a_2)) \\
		(s_1, s_2) \mapsto (v_1(s_1), v_2(s_2))
	\end{cases}
\end{equation}
where $\mu_T : T_{s_1}S_1 \times T_{s_2}S_2 \to T_{(s_1, s_2)} (S_1 \times S_2)$ is a way of turning a pair of changes into a change of a pair. For the basic case $T_s S := S$, $\mu_T(s_1, s_2) = (s_1, s_2)$ is the identity: a pair of changes is already a change of a pair. For $T_s S := FS$ the case of a coalgebra, $\mu_T : FS_1 \times FS_2 \to F(S_1 \times S_2)$ requires $F$ to be a \emph{lax monoidal functor}. For $F = \mathcal{P}$ we may define $\mu_{\mathcal{P}}(U_1, U_2) := U_1 \times U_2$ to send a pair of subsets to a subset of pairs; for $F = D$ we may define $\mu_{D}(p_1, p_2)$ to be the independent product of distributions.

Far more interesting than the parallel product is an actual coupling of Moore machines where the component machines interact. When the interfaces invovled are simple, such a coupling can be described by a \emph{(directed) wiring diagram} \cite{rupel2013operadtemporalwiringdiagrams}. Here are three basic wiring diagrams which will form our running examples:

\begin{equation}
		\begin{array}{ccc}
\begin{tikzpicture}[oriented WD, every fit/.style={inner xsep=\bbx, inner ysep=\bby}, bb small]
  \node[bb={1}{1}, fill=blue!10] (X1) {};
  \node[bb={1}{1}, fill=blue!10, below = 2 of X1] (X2) {};
  \node[bb={2}{2}, fit={(X1) (X2)}] (Z) {};
\draw[ar] (Z_in1') to (X1_in1);
\draw[ar] (Z_in2') to (X2_in1);
\draw[ar] (X1_out1) to (Z_out1');
\draw[ar] (X2_out1) to (Z_out2');
  \end{tikzpicture}
			\qquad & \qquad
\begin{tikzpicture}[oriented WD, every fit/.style={inner xsep=\bbx, inner ysep=\bby}, bb small]
  \node[bb={1}{2}, fill=blue!10] (X1) {};
  \node[bb={2}{1}, fill=blue!10, below right = 2 and 2 of X1] (X2) {};
  \node[bb={1}{2}, fit={(X1) (X2)}] (Z) {};
\draw[ar] (Z_in1') to (X1_in1);
\draw[ar] (Z_in1') to (X2_in2);
\draw[ar] (X1_out2) to (X2_in1);
\draw[ar] (X1_out1) to (Z_out1');
\draw[ar] (X2_out1) to (Z_out2');
  \end{tikzpicture}

			\qquad & \quad

\begin{tikzpicture}[oriented WD, every fit/.style={inner xsep=\bbx, inner ysep=\bby}, bb small]
  \node[bb={2}{2}, fill=blue!10] (X1) {};
  \node[bb={1}{1}, fit={(X1) ($(X1.north)+(0,1)$)}] (Y1) {};
  \draw[ar] (Y1_in1') to (X1_in2);
  \draw (X1_out2) to (Y1_out1');
  \draw[ar] let \p1=(X1.north east), \p2=(X1.north west), \n1={\y1+\bby}, \n2=\bbportlen in
          (X1_out1) to[in=0] (\x1+\n2,\n1) -- (\x2-\n2,\n1) to[out=180] (X1_in1);
\end{tikzpicture}
			\\
			\begin{gathered}
				\textbf{parallel}\\
			\end{gathered}
			\qquad & \qquad
			\begin{gathered}
				\textbf{cascade}\\
			\end{gathered}
			\qquad & \quad
			\begin{gathered}
				\textbf{feedback}\\
			\end{gathered}
		\end{array}
	\end{equation}

In order to describe how wiring diagrams act on Moore machines, we must first make three observations:

\smallskip
\noindent\textbf{1.} Every wiring diagram determines a \emph{lens}.\footnote{In fact, wiring diagrams may be \emph{defined} as lenses in free cartesian categories, see Chapter 1.4 of \cite{jaz-2021-book}.} A lens $\lens{w^{\sharp}}{w} : \lens{A_{1,o_1}}{o_1 : O_1} \leftrightarrows \lens{A_{2,o_2}}{o_2: O_2}$ is a pair of maps
\begin{equation}
\begin{cases}
w^{\sharp} : (o_1 \in O_1) \times A_{w(o_1)} \to A_{1,o_1} \\
w : O_1 \to O_2
\end{cases}.
\end{equation}
The above diagrams correspond to the following simple lenses:

\hspace{-1cm}
\begin{minipage}{\textwidth}
  \begin{equation*}
    \begin{array}{ccc}
      \begin{gathered}
        {\textstyle 		\textbf{parallel}:
          \lens{A_1}{O_1} \parallelprod \lens{A_2}{O_2} \leftrightarrows \lens{A_1 \times A_2}{O_1 \times O_2}
        }			\end{gathered}
      & \begin{gathered}
          {\textstyle				\textbf{cascade}:
            \lens{A}{O_1 \times M} \parallelprod \lens{M \times A}{O_2} \leftrightarrows \lens{A}{O_1 \times O_2}
          }			\end{gathered}
      & \begin{gathered}
          {\textstyle				\textbf{feedback}:
           \lens{A \times M}{M \times O} \to \lens{A}{O}
          }			\end{gathered}
      \\
      \begin{cases}
        ((o_1, o_2), (a_1, a_2)) \mapsto (a_1, a_2) \\
        (o_1, o_2) \mapsto (o_1, o_2)
      \end{cases} &
      \begin{cases}
        (((o_1, m), o_2), a) \mapsto (a, (m, a)) \\
        ((o_1, m), o_2) \mapsto (o_1, o_2)
      \end{cases}    &
      \begin{cases}
        ((m, o), a) \mapsto (a, m) \\
        (m, o) \mapsto o
      \end{cases}
    \end{array}
  \end{equation*}
\end{minipage}

\smallskip
\noindent\textbf{2.} Moore machines are themselves lenses of the special form $\lens{u}{v} : \lens{T_sS}{s : S} \leftrightarrows \lens{A_o}{o : O}$.
\smallskip

\noindent\textbf{3.} We may \emph{compose} lenses $\lens{w^{\sharp}}{w} :
\lens{A_{1,o_2}}{o_1 : O_1} \leftrightarrows \lens{A_{2,o_2}}{o_2 : O_2}$ and $\lens{t^{\sharp}}{t} :
\lens{A_{2, o_2}}{o_2 : O_2} \leftrightarrows \lens{A_{3, o_3}}{o_3 : O_3}$ by the formulas
\begin{equation}
	\lens{t^{\sharp}}{t} \circ \lens{w^{\sharp}}{w} := \begin{cases}
		(o_1, a_3) \mapsto w^{\sharp}(o_1, t^{\sharp}(w(o_1), a_3)) \\
		o_1 \mapsto t(w(o_1))
	\end{cases}
\end{equation}

With these observations in mind, we may couple systems $\lens{u_i}{v_i} : \lens{T_{s_i}S_i}{s_i : S_i} \leftrightarrows \lens{A_{i,o_i}}{o_i : O_i}$ according to a wiring diagram (or more generally a \emph{coupling law}) $\lens{w^{\sharp}}{w} : \lens{A_{1,o_1}}{o_1:O_1} \parallelprod \cdots \parallelprod \lens{A_{n,o_n}}{o_n : O_n} \leftrightarrows \lens{A_o}{O : O}$ by taking the composite $\lens{w^{\sharp}}{w} \circ \left(\lens{u_1}{v_1} \parallelprod \cdots \parallelprod \lens{u_n}{v_n}\right)$.
The main takeaway here is that we have \emph{objectified} the coupling law into a lens; this objective coupling law then \emph{acts} upon systems by composition on the right, coupling them together. For example, the \textbf{cascade} wiring diagram acts by taking the cascade product of Moore machines (see e.g. Definition 1 of \cite{geatti2025automata}).

Maps of Moore machines are also important. See Chapter 3 of \cite{jaz-2021-book} for examples.

\subsection{Certified Moore machines as certified lenses}

As we noticed in the introduction, the setup of assume-guarantee reasoning --- with its specificiation $\varphi$ of states, assumption $\alpha$ on actions, and guarantee $\gamma$ on observations --- has a form highly reminiscent of a Moore machine itself. We will see that there is a strong wiff of the \emph{microcosm principle} at work in assume-guarantee reasoning.

We begin by associating the assumptions and guarantees to an interface. We may take the guarantee $\gamma : O \to \mathsf{bool}$ to be a predicate on obeservations, and we need an assumption $\alpha_o : A_o \to \mathsf{bool}$ on each set of actions; furthermore, we require that $\alpha_o(a) \Rightarrow \gamma_o$ for all $o \in O$.\footnote{This implication condition may seem somewhat mysterious, but it is clarified by viewing propositions as types: the morally correct signature of the assumption is $\alpha : (o : O) \times A_o \times \gamma(o) \to \mathsf{Prop}$. Just as the set of actions $A_o$ depends on the observation, the type of certificates that the assumption holds depends on a certificate that the guarantee holds of a given observation.} We write $\lens{A_o}{o : O} \vDash \lens{\alpha}{\gamma}$ to mean that the interface is declared to satisfy the stated assumption and guarantee.

In a simple interface where $A_o = A$, we may take $\alpha(o, a) := \gamma(o) \wedge \overline{\alpha}(a)$ for a fixed assumption $\overline{\alpha} : A \to \mathsf{bool}$; the more general form of $\alpha(-,-)$ in the dependent case is to ensure stability when substituting by a map of interfaces. If $\lens{f_{\sharp}}{f} : \lens{A_{1, o_1}}{o_1 : O_1} \rightrightarrows \lens{A_{2, o_2}}{o_2 : O_2}$ is a map of interfaces, and $\lens{A_{2, o_2}}{o_2 : O_2} \vDash \lens{\alpha}{\gamma}$, then we may pull back the assumption and guarantee,
\begin{equation}
\lens{A_{1, o_1}}{o_1 : O_1} \vDash \lens{f_{\sharp}}{f}^{\ast}\lens{\alpha}{\gamma} := \begin{cases}
 (o_1, a_1) \mapsto \alpha(f(o_1), f_{\sharp}(o_1, a_1)) \\
  o_1 \mapsto \gamma(f(o_1))
\end{cases}.
\end{equation}
In this way, assumptions and guarantees are \emph{fibred} over interfaces. Note that even if all interfaces involved are simple, subsitution into $\alpha(o_2, a_2) := \gamma(o_2) \wedge \overline{\alpha}(a_2)$ yields $\gamma(f(o_1)) \wedge \overline{\alpha}(f_{\sharp}(o_1, a_1))$ which is no longer of the form $\gamma(o) \wedge \alpha(a)$; this is why we must allow for the more general form of assumption. \footnote{For example, substituting by a chart $\lens{a}{o} : \lens{\{\mathsf{tick}\}}{\mathbb{N}} \rightrightarrows \lens{A_o}{o : O}$ gives the predicates $\gamma(o_n)$ and $\gamma(o_n) \wedge \overline{\alpha}(a_n)$ of $n \in \mathbb{N}$ that the guarantee and assumption hold of the sequence of observations and actions $\lens{a}{o}$, even though there is only one possible action $\mathsf{tick}$ for the clock $\lens{+1}{\id} : \lens{\N}{\N} \leftrightarrows \lens{\{\mathsf{tick}\}}{\N}$.}

We aim to show that a state $s \in S$ satisfies a \emph{spec} $\varphi : S \to \mathsf{bool}$, and the assumption hold of an action $a$, then the next state $u(s, a)$ will also satisfy the spec. In general, $u(s, a) \in T_s S$ is a \emph{change} of state; therefore, we need a way to \emph{lift} a predicate $\varphi$ of states to a predicate $\Lift_s \varphi : T_s S \to \mathsf{bool}$ of changes to state. For the classical case $T_s S := S$, we may take $\Lift_s \varphi := \varphi$; for a general endofunctor $T_s S := FS$, we must use a \emph{predicate lifting} associated to $F$ (see Definition 6.1.1 of \cite{Jacobs_2016}, or \cite{ulrichPredicateLifting}). We may then say that a system $\lens{u}{v} : \lens{T_s S}{s : S} \leftrightarrows \lens{A_o}{o : O}$ is \textbf{certified} as follows:
\begin{equation}
\lens{u}{v} \vDash \lens{\Lift \varphi}{\varphi} \leftrightarrows \lens{\alpha}{\gamma} \iff \begin{cases}
	\varphi(s) \wedge \alpha(v(s),a) \Rightarrow \Lift_{s} \varphi(u(s, a)) \\
	\varphi(s) \Rightarrow \gamma(v(s))
\end{cases}
\end{equation}

We may observe that this is a special case of the certification of lenses. If $\lens{w^{\sharp}}{w} : \lens{A_{1, o_1}}{o_1 : O_1} \leftrightarrows \lens{A_{1, o_1}}{o_1 : O_1}$ and $\lens{A_{i, o_i}}{o_i : O_i} \vDash \lens{\alpha_i}{\gamma_i}$, then
\begin{equation}
\lens{w^{\sharp}}{w} \vDash \lens{\alpha_1}{\gamma_1} \leftrightarrows \lens{\alpha_2}{\gamma_2} \iff \begin{cases}
	\gamma_1(o_1) \wedge \alpha_2(w(o_1), a_2) \Rightarrow \alpha_1(o_1, w^{\sharp}(o_1, a_2)) \\
	\gamma_1(o_1) \Rightarrow \gamma_2(w(o_1))
\end{cases}.
\end{equation}

Intuitively, thinking of $\lens{w^{\sharp}}{w}$ as a wiring diagram, the guarantees of all inner boxes must imply the guarantee of the outer box, and the inner guarantees and outer assumption must together imply the inner assumptions.
For the \textbf{cascase} wiring diagram $\lens{w^{\sharp}}{w} : \lens{A}{O_1 \times M} \parallelprod \lens{M \times A}{O_2} \leftrightarrows \lens{A}{O_1 \times O_2}$, proving that $\lens{w^{\sharp}}{w} \vDash \lens{\alpha_1}{\gamma_1} \parallelprod \lens{\alpha_2}{\gamma_2} \leftrightarrows \lens{\alpha_3}{\gamma_3}$
explicitly means validating that
\begin{equation}
\begin{cases}
	\gamma_1(o_1, m) \wedge \gamma_2(o_2) \wedge \alpha_3((o_1, o_2), a) \Rightarrow \alpha_1((o_1, m), a) \wedge \alpha_2(o_2, (m, a))\\
	\gamma_1(o_1, m) \wedge \gamma_2(o_2) \Rightarrow \gamma_3(o_1, o_2)
\end{cases}
\end{equation}
Supposing that $\alpha_i(o, a) = \gamma_i(o) \wedge \overline{\alpha}_i(a)$ is of the simple form, it then suffices to show that $\overline{\alpha}_3(a) \Rightarrow \overline{\alpha}_1(a)$, that $\gamma_1(o_1, m) \wedge \overline{\alpha}_3(a) \Rightarrow \overline{\alpha}_2(m, a)$, and that $\gamma_1(o_1, m) \wedge \gamma_2(o_2) \Rightarrow \gamma_3(o_1, o_2)$. These implications are suggested by the form of the wiring diagram (which always produces a simple lens).

Checking that certification of lenses is closed under composition of lenses and the parallel product gives us the proof rule \eqref{rule1} from the introduction; checking that certification of systems is closed under substitution along maps of systems gives us proof rule \eqref{rule2}.
In total, knowing that the 2-algebra of certified machines is fibered over that of generalized Moore machines gives us a compositional algebra for assume-guarantee reasoning.

\section{Certified Moore machines from fibrations of tangencies}\label{sec:gen.moore.machine}

In this section, we formalize the above story, giving a compact description of all of the data involved and a method to check the properties required for the proof rules. We follow \cite{libkind2025doubleoperadictheorysystems} in defining generalized Moore machines internal to a \emph{tangency}.\footnote{What we here call a tangency was called a ``doctrine of dynamical systems'' in \cite{Jaz_Myers_2021}.}

\begin{definition}[Definition 7.1 of \cite{libkind2025doubleoperadictheorysystems}]
A \emph{tangency} consists of:
\begin{enumerate}
\item A category $\acat{B}$ whose objects we think of as spaces of states or observations,
\item A (cloven) Grothendieck fibration $\pi : \acat{E} \to \acat{B}$ whose fibers $\acat{E}_O$ over a space $O \in \acat{B}$ we think of as `bundles $A_o$ of possible actions given an observation $o$'. We will still denote objects $A \in \acat{E}$ as pairs $\lens{A_o}{o : O}$ where $O = \pi A$ and maps in $\acat{E}$ as pairs $\lens{f_{\sharp}}{f} : \lens{A_o}{o : O} \rightrightarrows \lens{A'_{o'}}{o' : O'}$ where $f$ is the value under $\pi$ and $f_{\sharp}$ is the vertical factor. We think of $f_{\sharp}$ as having signature $(o : O) \times A_o \to A_{f(o)}$.
\item A section $T : \acat{B} \to \acat{E}$ assigning to each `state space' $S$ its bundle $\lens{T_sS}{s : S}$ of \emph{changes}.
\end{enumerate}
The 2-category of tangencies consists of a cartesian functor between fibrations and a colaxitor on sections. A \emph{symmetric monoidal tangency} is a pseudo-monoid in this 2-category; equivalently, it is a tangency where $\pi : \acat{E} \to \acat{B}$ is a strict monoidal fibration and $T : \acat{B} \to \acat{E}$ is lax monoidal. We will always write the monoidal product as $\parallelprod$, and will generally assume it is cartesian in both $\acat{E}$ and $\acat{B}$.
\end{definition}

\begin{definition}[Definition 2.35 of \cite{libkind2025doubleoperadictheorysystems}]
 The double category $\Lens(\pi) := \mathbb{S}\mathbf{pan}(\acat{E}, (\mathsf{vert}, \mathsf{cart}))$ of Spivak lenses \cite{spivak2022generalizedlenscategoriesfunctors} in a fibration $\pi : \acat{E} \to \acat{B}$ is the double category of spans in $\acat{E}$ whose left leg is vertical and whose right leg is cartesian, composing by pullback. (See Definition 3.8 and Theorem 3.9 of \cite{myers2021cartesianfactorizationsystemsgrothendieck} for a direct comparison with \cite{spivak2022generalizedlenscategoriesfunctors}).
\end{definition}

\begin{definition}\label{defn:moore.machine}
 Let $(\pi, T)$ be a tangency. A $(\pi, T)$-Moore machine is a $pi$-lens $$\lens{T_s S}{s : S} \xleftarrow{\lens{u}{\id}} \lens{A_{v(s)}}{s : S} \xrightarrow{\lens{\id}{v}} \lens{A_o}{o : O}.$$
A morphism of $(\pi, T)$-machines is a square in $\Lens(\pi)$ whose left leg is of the form $T\sigma$.
\end{definition}

The category $\Moore(\pi, T)$ forms a loose right module of the double category $\Lens(\pi)$ by composition on the right; if $(\pi, T)$ is symmetric monoidal then $\Moore(\pi, T)$ is a symmetric monoidal loose right module. The 2-algebraic structure of the symmetric monoidal loose right module $\Moore(\pi, T)$ encodes the compositionality of generalized Moore machines, as well as of their behaviors (via representable functors, as in Theorem 6.2 of \cite{Jaz_Myers_2021} or Section 5.3 of \cite{jaz-2021-book}).

Moreover, the assignment $(\pi, T) \mapsto \Moore(\pi, T)$ is 2-functorial from the 2-category of tangencies to the 2-category of loose right modules (Section 7.2 of \cite{libkind2025doubleoperadictheorysystems}). In \cref{app:fibs-of-tans}, we show that this 2-functor futhermore preserves \emph{fibrations of tangencies}.

\begin{theorem}\label{thm:moore.preserves.fibrations}
The 2-functors $\Lens : \Fib \to \Dbl$ and $\Moore : \Tangency \to \lModr$ preserve fibrations (as sketched in Example 5.15 of \cite{arkor2024enhanced2categoricalstructurestwodimensional}); in particular, a fibration of tangencies induces a fibration of loose right modules.
\end{theorem}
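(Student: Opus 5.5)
Since $\Lens$ and $\Moore$ are already known to be 2-functors (Section 7.2 of \cite{libkind2025doubleoperadictheorysystems}), we prove the statement with the representable, Street-style notion of fibration in a 2-category. A 1-cell $p : X \to Y$ is a fibration when, for every $Z$, the functor $\mathcal{K}(Z, X) \to \mathcal{K}(Z, Y)$ is a fibration of hom-categories, stably under precomposition. Equivalently, $p$ is a fibration when the canonical comparison $X^{\to} \to p \downarrow Y$ has a right adjoint with identity counit, internal to the 2-category. Following Example 5.15 of \cite{arkor2024enhanced2categoricalstructurestwodimensional}, it suffices to check that each 2-functor preserves the limits used to phrase this condition (comma objects, or cotensors with $\mathbf{2}$ together with pullbacks), at least up to comparison. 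Any 2-functor preserves adjunctions and identity counits. So once comma objects are preserved, the fibration structure $X^{\to} \rightleftarrows p\downarrow Y$ is carried to the corresponding structure for the image of $p$.

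\textbf{Step 1: $\Lens : \Fib \to \Dbl$.} Comma objects and cotensors in $\Fib$ are computed componentwise: $\pi^{\mathbf 2}$ is $\pi^{\mathbf 2} : \acat{E}^{\mathbf 2} \to \acat{B}^{\mathbf 2}$, and comma objects are similar. Next we show $\Lens(\pi^{\mathbf{2}}) \cong \Lens(\pi)^{\mathbf 2}$ as double categories, where the cotensor in $\Dbl$ is taken along tight arrows.
\begin{itemize}
\item A vertical morphism in $\acat{E}^{\mathbf 2}$ is a square with vertical components.
\item A cartesian morphism in $\acat{E}^{\mathbf 2}$ is a square whose two components are cartesian.
\end{itemize}
Hence a span in $\acat{E}^{\mathbf 2}$ with vertical left leg and cartesian right leg is exactly a tight arrow between $\Lens(\pi)$-spans. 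One must check that cartesian maps in $\pi^{\mathbf 2}$ are componentwise cartesian; this holds because cartesian lifts in $\acat{E}^{\mathbf 2}$ are computed pointwise using a cleavage. Composition is by pullback in both cases and pullbacks in $\acat{E}^{\mathbf 2}$ are pointwise, so the isomorphism respects loose composition. The same pointwise argument handles comma objects and pullbacks. The result is that $\Lens$ preserves these limits up to canonical isomorphism, and therefore preserves fibrations.

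\textbf{Step 2: $\Moore : \Tangency \to \lModr$.} A tangency is a fibration together with a section, and 1-cells carry colaxitors. Cotensors and comma objects in $\Tangency$ are computed on the underlying fibrations, with section $T^{\mathbf 2}$ (respectively the induced section). By Definition~\ref{defn:moore.machine}, a $(\pi,T)$-Moore machine is a loose arrow of $\Lens(\pi)$ with domain in the image of $T$, and morphisms are squares with left leg $T\sigma$. Using Step 1, machines for $(\pi^{\mathbf 2}, T^{\mathbf 2})$ are then arrows of $(\pi,T)$-machines. Likewise, the module action by $\Lens(\pi^{\mathbf 2})\cong \Lens(\pi)^{\mathbf 2}$ agrees with the pointwise action. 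This identifies $\Moore$ of the comma/cotensor tangency with the corresponding limit in $\lModr$, and preservation of fibrations follows as in Step 1. Because $\Lens$ is 2-functorial, the final claim then follows directly: a fibration of tangencies induces both a fibration of double categories $\Lens$ and a fibration of loose right modules $\Moore$ compatible with it.

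\textbf{Main obstacle.} I expect the hardest part to be the colax structure. A 1-cell of tangencies is only colax on sections, so the comma object in $\Tangency$ must be formed with colax squares, and its section is not simply pointwise. I would first show that the relevant limits in $\Tangency$ still have strict section components (the cartesian functor part is strict). If that fails, I would weaken the claim to preservation of the adjunction-characterization directly: a 2-functor preserves adjunctions, so it suffices to construct the comparison map $\Moore(X^{\to}) \to \Moore(X)^{\to}$ and check that it commutes with the images of the adjoint and the unit.
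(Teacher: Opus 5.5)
Your proposal is correct. For $\Lens$ you make the same reduction as the paper: a 2-functor that preserves the colax limits of tight arrows ($X^{\mathbf 2}$ and the comma of the tight arrow $p$) carries the adjunction, and hence preserves fibrations. The difference is how you verify the limit preservation. The paper factors $\Lens$ as $\Fib \to \AdTriple \xrightarrow{\Span} \Dbl$ and uses that $\Span$ is a nerve \cite{haugseng2023twovariablefibrationsfactorisationsystems}, so it preserves all 2-limits. That leaves only the lemma that commas in $\AdTriple$ are computed componentwise in $\Cat$. You instead identify $\Lens(\pi^{\mathbf 2})\cong\Lens(\pi)^{\mathbf 2}$ by hand. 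This is more elementary and it works: cells of $\Lens(\pi)$ are determined by their boundary, so the cube conditions are automatic. In the comma case, however, the one nontrivial input is hidden in ``the same pointwise argument''. You need $p_{\acat E}$ to preserve the cartesian-along-vertical pullbacks used for loose composition. This holds because in a fibration every commuting square with two parallel cartesian sides and two parallel vertical sides is a pullback, and $p_{\acat E}$ preserves both classes. That is exactly the content of the paper's lemma on slices of adequate triples.

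For $\Moore$ the paper does not rerun a limit argument. It uses that fibrations of loose right modules are componentwise, takes the double-category part from $\Lens$, and checks one thing directly. A cartesian lift of a machine morphism (a square with tight side $T\sigma$) can be chosen with tight side $\Lift(\hat\sigma)$, because $\Lift$ sends $p_{\acat B}$-cartesian maps to $p_{\acat E}$-cartesian ones. Your uniform route also goes through, and the obstacle you flag dissolves. The only comma needed is that of the tight arrow $p$, which satisfies $p_{\acat E}\Lift = T p_{\acat B}$ strictly. So the comma tangency has the pointwise section $(b,x,t)\mapsto(Tb,\Lift x,Tt)$. Its machines are then visibly a $(\pi,T)$-machine, a $(\acat P\pi,\Lift)$-machine, and a machine morphism with tight side $Tt$ into the image of the latter, i.e.\ the comma of modules. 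Colaxity enters only through the right adjoint $s$. For $s$ to be a 1-cell of $\Tangency$, its colaxitor must compare the chosen $p_{\acat E}$-lift of $Tt$ at $\Lift x$ with $\Lift(\hat t)$. This comparison is an isomorphism because $\Lift$ is cartesian. So both proofs spend the same hypothesis. Yours hides it inside ``$p$ is a fibration in $\Tangency$'' (which is how \cref{lem:fibration.of.tangencies} unpacks it) and gains a single argument for both 2-functors. The paper's version makes visible where cartesianness of $\Lift$ is used. (Minor slip: for a right adjoint with identity counit the comparison should land in $Y\downarrow p$, with objects $y\to px$; $p\downarrow Y$ gives opfibrations.)
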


In order to give a compositional algebra for assume-guarantee certified $(\pi, T)$-machines, it therefore suffices to give a fibration over the tangency $(\pi, T)$.

\begin{lemma}\label{lem:fibration.of.tangencies}
Let $\pi : \acat{E} \twoheadrightarrow \acat{B}$ and $T : \acat{B} \to \acat{E}$ form a tangency. A fibration of tangencies over $(\pi, T)$ is given by:
\begin{enumerate}
\item Another tangency $\acat{P}\pi : \acat{PE} \twoheadrightarrow \acat{PB}$ with section $\Lift : \acat{PB} \to \acat{PE}$.
	\item A \emph{strict} morphism of tangencies:
	\begin{equation}
\begin{tikzcd}[row sep=small]
	{\acat{PB}} & {\acat{PE}} & {\acat{PB}} \\[1.5ex]
	{\acat{B}} & {\acat{E}} & {\acat{B}}
	\arrow["\Lift", from=1-1, to=1-2]
	\arrow["{p_{\acat{B}}}"', two heads, from=1-1, to=2-1]
	\arrow["{\acat{P}\pi}", two heads, from=1-2, to=1-3]
	\arrow["{p_{\acat{E}}}"', two heads, from=1-2, to=2-2]
	\arrow["{p_{\acat{B}}}"', two heads, from=1-3, to=2-3]
	\arrow["T"', from=2-1, to=2-2]
	\arrow["\pi"', two heads, from=2-2, to=2-3]
\end{tikzcd}
	\end{equation}
	Such that $p_{\acat{B}}$ and $p_{\acat{E}}$ are both fibrations, and $p_{\acat{E}}$ sends chosen $\acat{P}\pi$-cartesian morphisms to chosen $\pi$-cartesian morphisms (strictly preserving the cleavage). Moreover, $\Lift$ must be cartesian as well, though it does not need to strictly preserve the cleavage.
		\end{enumerate}
		A fibration of monoidal tangencies in addition requires that $p_{\acat{B}}$ and $p_{\acat{E}}$ be (strict) monoidal fibrations and furthermore that the above diagram commutes in the 2-category of symmetric monoidal categories and lax symmetric monoidal functors.
\end{lemma}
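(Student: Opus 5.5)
We will prove the lemma by unfolding what a fibration is in the 2-category $\Tangency$, using the representable characterisation of fibrations in a 2-category (as in Example 5.15 of \cite{arkor2024enhanced2categoricalstructurestwodimensional}, the notion that \cref{thm:moore.preserves.fibrations} refers to). A 1-cell $p : (\acat{P}\pi, \Lift) \to (\pi, T)$ is a fibration when, for every tangency $X$, the functor $\Tangency(X, p)$ is a Grothendieck fibration, and precomposition by any 1-cell $X' \to X$ preserves cartesian 2-cells. The plan is to show that the listed data is exactly what makes this hold, by reducing to the known statement for $\Fib$. That statement is that a fibration in the 2-category of fibrations, with cartesian functors as 1-cells, is a square of fibrations $p_{\acat{E}}, p_{\acat{B}}$ over $\pi$ in which $p_{\acat{E}}$ maps chosen cartesian maps to chosen cartesian maps.

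First, the underlying 1-cell. A strict morphism of tangencies consists of a cartesian functor $(p_{\acat{E}}, p_{\acat{B}})$ with identity colaxitor $T p_{\acat{B}} = p_{\acat{E}} \Lift$, which is the displayed commuting diagram. We then analyse $\Tangency(X,p)$ in two steps.
\begin{enumerate}
\item \emph{Fibration part.} Given a 1-cell $(F_E, F_B, \tau)$ into $(\pi,T)$, a lift along $p$, and a 2-cell (a pair of natural transformations $\theta_E, \theta_B$ compatible with $\pi$ and $\tau$) into it, we construct cartesian lifts $\tilde\theta_B, \tilde\theta_E$ componentwise. We use cartesian lifts in $p_{\acat{B}}$ and $p_{\acat{E}}$.
\item \emph{Compatibility with $\acat{P}\pi$.} We check that $\acat{P}\pi(\tilde\theta_E) = \tilde\theta_B$. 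This follows by uniqueness of cartesian factorisation, because $p_{\acat{E}}$ strictly preserves chosen cleavages and $\acat{P}\pi$-cartesian maps are sent to $\pi$-cartesian ones. This is the same argument as in the $\Fib$ case, which we reuse.
\end{enumerate}

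Next, the section. We must produce the colaxitor $\tilde\tau : \tilde F_E \circ \Lift_X \Rightarrow \Lift\circ \tilde F_B$ lying over $\tau$. Its source $\tilde F_E \Lift_X$ and target $\Lift \tilde F_B$ sit over the lifted data. Since $\Lift$ is cartesian, $\Lift$ applied to the cartesian lifts $\tilde\theta_B$ is $p_{\acat{E}}$-cartesian. Hence the required component, which is a map into $\Lift \tilde F_B$ over the composite of $\tau$ with $T\theta_B$, factors uniquely through $\Lift(\tilde\theta_B)$. This factorisation defines $\tilde\tau$, and naturality follows from the uniqueness of factorisations. This explains why only $\Lift$ cartesian, and not strict preservation of the cleavage, is needed: the colaxitor is determined by a universal property, not chosen. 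We expect this step to be the main obstacle, namely verifying that the 2-cell so constructed is genuinely cartesian in $\Tangency(X,\acat{P}\pi)$. Concretely, any other 2-cell factoring through it must have its colaxitor component determined by cartesianness of $\Lift(\tilde\theta_B)$. The first thing we would try is to write both colaxitor squares and cancel the cartesian arrow.

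Finally, stability under precomposition is immediate, since lifts are defined pointwise. For the monoidal case we repeat the argument in the 2-category of pseudomonoids. Strict monoidal fibrations $p_{\acat{B}}, p_{\acat{E}}$ have monoidal cleavages, so the lifts inherit monoidal structure. The lax symmetric monoidal commutativity ensures that the constructed colaxitor is monoidal, by the same uniqueness-of-factorisation argument.
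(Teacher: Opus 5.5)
\textbf{Gap in step 2.} Your route is different from the paper's. The paper transports the question along $\Fib(\Tangency)\cong\Tangency(\Fib)$, given by symmetry of internalisation, and then applies the B\'enabou--Hermida description of fibrations in $\Fib$ via the gap map. As written, however, your argument breaks at step 2.

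The identity $\acat{P}\pi(\tilde\theta_E)=\tilde\theta_B\pi_X$ is a statement about the \emph{vertical} cleavages. You need it even to get a strictly commuting square $\acat{P}\pi\tilde F_E=\tilde F_B\pi_X$. It says that $\acat{P}\pi$ sends the chosen $p_{\acat E}$-lift of $\theta_{E,e}$ to the chosen $p_{\acat B}$-lift of $\pi\theta_{E,e}=\theta_{B,\pi_X e}$. The hypothesis you cite, that $p_{\acat E}$ sends chosen $\acat P\pi$-lifts to chosen $\pi$-lifts, concerns the horizontal cleavages and does not imply this. Nor can uniqueness of factorisation help: nothing guarantees that $\acat P\pi(\tilde\theta_{E,e})$ is even $p_{\acat B}$-cartesian.

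Here is a counterexample. Take $\acat B=\mathbf 1$, $\acat E=\acat{PB}=\mathbf 2=\{0<1\}$ and $T(\ast)=1$. Let $\acat{PE}=\{(i,j)\in\mathbf 2\times\mathbf 2 : i\le j\}$, with $\acat{P}\pi(i,j)=i$, $p_{\acat E}(i,j)=j$ and $\Lift(i)=(i,1)$. Everything you use holds:
\begin{itemize}
\item the only non-identity chosen $\acat P\pi$-lift, $(0,1)\to(1,1)$, goes to $\id_1$;
\item $p_{\acat E}\Lift=Tp_{\acat B}$;
\item $\Lift$ is cartesian, since $p_{\acat B}$-cartesian maps are identities.
\end{itemize}
Yet the chosen $p_{\acat E}$-lift of $0\to 1$ at $(1,1)$ is $(0,0)\to(1,1)$, which $\acat P\pi$ sends to $0\to 1$ rather than to $\id_1$. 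Now take $X$ the point tangency, $G=((1,1),1,\id)$, $F=(0,\ast,0\to1)$ and $\theta=(0\to 1,\id_\ast)$. Your componentwise lifts give $\tilde F_E=(0,0)$, lying over $0$, but $\tilde F_B=1$.

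The missing hypothesis is exactly what the paper's proof extracts. In the tangency internal to $\Fib$, the right-hand square $(\acat P\pi,\pi)\colon p_{\acat E}\to p_{\acat B}$ must be \emph{tight}, meaning $\acat P\pi$ strictly preserves the chosen vertical lifts. The left square $(\Lift,T)$ need only be loose, i.e.\ cartesian. The lemma's wording records only the horizontal condition, and the ``known statement for $\Fib$'' you quote omits the same vertical condition. You must add it.

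\textbf{Remaining omissions once that is fixed.}
\begin{itemize}
\item You never check that $\tilde F_E$ is a cartesian functor $\pi_X\to\acat P\pi$. This is not automatic. One factors $\tilde F_E(h)$ through a $\acat P\pi$-cartesian lift and uses that $G_E(h)$, $F_E(h)$ and $\tilde\theta_E$ are cartesian to invert the vertical part. In the paper this is absorbed into the B\'enabou--Hermida theorem applied to $\acat{PE}\to\acat{PB}\times_{\acat B}\acat E$.
\item You prove only that the listed data yields a fibration. The lemma is a characterisation. The converse (that a fibration of tangencies forces the cleavage compatibilities and forces $\Lift$ to be cartesian) comes for free from the paper's isomorphism, but your unfolding does not address it.
\item The step you flag as the main obstacle is in fact routine. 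For a competitor with colaxitor $\eta$, the two colaxitor composites have the same image under $p_{\acat E}$. They also agree after postcomposition with the $p_{\acat E}$-cartesian map $\Lift(\tilde\theta_B)$, so they coincide.
\end{itemize}
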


We think of $\acat{PB}_O$ as the category of predicates concerning an observation $o : O$, and $\acat{PE}_{\lens{A_o}{o : O}}$ as predicates concerning actions $a : A_o$. The fibration $\acat{P}\pi$ witnesses that predicates on actions should depend on predicates on observations; this dependency is important to keep track of when considering explicit certificates, and not just the truth of predicates.

Finally, we give a construction which directly captures the story developed in \cref{sec:moore.machine} for ordinary Moore machines.

\begin{lemma}\label{lem:example.fibration.tangency}
Let $\thecat{PSet} := \int^{X \in \thecat{set}} \thecat{Set}(X, \mathsf{bool})$ denote the category of sets equipped with a predicate. Then
\begin{equation}
\begin{tikzcd}[sep = small]
	{\thecat{PSet}} & {\thecat{PSet}^{\downarrow}} & {\thecat{PSet}} \\[1.5ex]
	{\thecat{Set}} & {\thecat{Set}^{\downarrow}} & {\thecat{Set}}
	\arrow["\Lift", from=1-1, to=1-2]
	\arrow["p"', two heads, from=1-1, to=2-1]
	\arrow["{\mathsf{cod}}", two heads, from=1-2, to=1-3]
	\arrow["{p^{\downarrow}}"', two heads, from=1-2, to=2-2]
	\arrow["p", two heads, from=1-3, to=2-3]
	\arrow["T"', from=2-1, to=2-2]
	\arrow["{\mathsf{cod}}"', two heads, from=2-2, to=2-3]
\end{tikzcd}
\end{equation}
is a cartesian monoidal fibration of tangencies, where
\begin{equation}
  TS = (\pi_1 : S \times S \to S),
  \qquad\quad
  \Lift(S, \varphi) = (\pi_1 : (S \times S, \varphi \times \varphi) \to (S, \varphi)).
\end{equation}
\end{lemma}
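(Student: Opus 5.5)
We verify the conditions of \cref{lem:fibration.of.tangencies} one by one, making each piece explicit. An object of $\thecat{PSet}$ is a pair $(X,\varphi)$ with $\varphi : X \to \mathsf{bool}$. A morphism $(X,\varphi)\to(Y,\psi)$ is a function $f$ with $\varphi(x)\Rightarrow\psi(f(x))$, and $p$ forgets the predicate. $\thecat{PSet}^{\downarrow}$ is the arrow category with $p^{\downarrow}$ applied componentwise, and $\mathsf{cod}$ is the codomain functor.

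\textbf{Step 1: the two fibrations.} First, $p$ is a fibration: the cartesian lift of $f : X\to Y$ at $(Y,\psi)$ is $f : (X,\psi\circ f)\to(Y,\psi)$, and we fix these as the chosen lifts. $\thecat{PSet}$ has pullbacks, computed as the pullback of sets with the conjunction of the pulled-back predicates, and $p$ preserves them. Hence $\mathsf{cod} : \thecat{PSet}^{\downarrow}\to\thecat{PSet}$ is a fibration whose chosen cartesian maps are pullback squares. Next, $p^{\downarrow}$ is a fibration because $p$ is, with lifts taken componentwise: given $(g,f)$ from $(E\to B)$ into $(E'\xrightarrow{q}B')$ carrying predicates $(\psi_E,\psi_B)$, lift to $(\psi_E\circ g,\psi_B\circ f)$; this is still a map of predicated sets since $q$ preserves predicates. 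We then check that $p^{\downarrow}$ sends the chosen $\mathsf{cod}$-cartesian squares to the chosen $\mathsf{cod}$-cartesian squares of $\thecat{Set}$. This is immediate once the cleavage of $\thecat{Set}^{\downarrow}$ is the standard pullback $\{(b,e)\mid f(b)=q(e)\}$ and the cleavage of $\thecat{PSet}^{\downarrow}$ is that same set carrying the conjunction predicate. Both squares in the statement commute strictly on the nose.

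\textbf{Step 2: sections.} We check that $\mathsf{cod}\circ\Lift=\id$ and $p^{\downarrow}\circ\Lift=T\circ p$; both are immediate from the formulas. On morphisms, $\Lift(f)=(f\times f,f)$, which preserves $\varphi\times\varphi$. We take $\Lift$ to be a strict, rather than colax, morphism of sections, so the 2-cell in the tangency morphism is an identity. Cartesianness of $\Lift$ means it sends $p$-cartesian maps to $p^{\downarrow}$-cartesian maps. If $f:(X,\psi f)\to(Y,\psi)$ is $p$-cartesian, then $\Lift f$ has predicates $(\psi f\times\psi f,\ \psi f)=(\psi\times\psi)\circ(f\times f),\ \psi\circ f)$, which is exactly the componentwise cartesian lift. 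Strict preservation of the cleavage is not needed.

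\textbf{Step 3: monoidal structure.} All the categories involved are cartesian. In $\thecat{PSet}$ the product is $(X\times Y,\varphi\wedge\psi)$, and in the arrow categories products are componentwise. The functors $p$, $p^{\downarrow}$ and $\mathsf{cod}$ strictly preserve products, given the standard choice of products, and products of cartesian maps are cartesian, so these are strict monoidal fibrations. $T$ and $\Lift$ preserve products up to the canonical shuffle isomorphism $(S\times S')\times(S\times S')\cong(S\times S)\times(S'\times S')$. They are therefore strong, hence lax, symmetric monoidal. It remains to check that $p^{\downarrow}$ carries the structure maps of $\Lift$ to those of $T$, which holds because the shuffle is the same underlying function.

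The main obstacle is bookkeeping of strictness: the lemma demands strict preservation of the cleavage by $p_{\acat{E}}$ and strict commutation of the monoidal squares. I would handle this by fixing the cleavages and product choices in $\thecat{PSet}$ to be exactly those of $\thecat{Set}$ decorated with predicates, so that every required equation holds on the nose.
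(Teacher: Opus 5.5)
Your proof is correct and follows essentially the paper's route. Pullbacks in $\thecat{PSet}$ make $\mathsf{cod}$ a fibration, $p^{\downarrow}$ is a fibration with componentwise lifts (you verify this directly where the paper cites Hermida), and $\Lift$ is cartesian because $(\psi f)\times(\psi f)=(\psi\times\psi)\circ(f\times f)$. Your monoidal verification also fills in a part the paper leaves implicit.

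The only real difference is which strictness condition you check. You show that $p^{\downarrow}$ sends chosen pullback squares to chosen pullback squares, as literally required in \cref{lem:fibration.of.tangencies}. The paper instead checks that $(\mathsf{cod},\mathsf{cod}) : p^{\downarrow} \to p$ sends chosen $p^{\downarrow}$-lifts to chosen $p$-lifts. That condition is immediate from your componentwise cleavage, and you should state it explicitly.
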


\section{Certifying the stable equilibria of open ODEs}
We now exhibit a \emph{quantitative} instance of certified generalized Moore machines. Though it escapes being captured by the general method of \cref{sec:gen.moore.machine}, it still sees the algebra of  compositional verification expressed by a fibration of algebras of systems.
In this section we will capture the algebraic structure of the \emph{Lyapunov method} in the stability theory of ODEs.

\subsection{A review of LISS}
\label{sec:liss}

\begin{definition}[{Open manifolds}]
  Let $\Man$ be the category whose objects are open subsets of cartesian spaces (which are those of the form $\R^n$ for $n \in \N$, including $\R^0 = \{\ast\}$) with a $C^1$ manifold structure, and whose maps are $C^{1,1}$-functions, i.e. once-differentiable functions with Lipschitz derivative. We will call these \emph{open manifolds}.
  We also consider the category of \textbf{pointed open manifolds} $\Man_\bullet$ and \emph{definite} maps: $f(x) = y_0$ if and only if $x = x_0$.
\end{definition}

\begin{convention}
  From now on, we tacitly point all our manifolds and functions, adopting the notational convention that a pointed open manifold $X$ is pointed by $x_0$, but also assuming, without loss of generality, that $x_0 = 0$.
\end{convention}

Consider an open ODE $\dot x = f(x,a)$ on an open manifold $X$, where $A$ is an open manifold of \emph{controls}, \emph{inputs}, or \emph{parameters}, such that $0 \in X$ is an equilibrium point.
\emph{(Local) Input-to-State Stability} ((L)ISS) is a property of such an equilibrium which, roughly speaking, prescribes that trajectories starting nearby should not stray far, and in fact eventually converge back to the equilibrium.
We give precise definitions slightly adapted to our setting and then show how to recover the classical ones (for which we refer to \cite{sontagInputStateStability2008,mironchenkoInputtoStateStabilityTheory2023}).

We start with some technical vocabulary.
The theory of ISS makes a great use of so-called \emph{comparison functions}.

\begin{definition}[{Comparison functions, \cite[§2.4]{sontagInputStateStability2008}}]
  A \textbf{local\footnotemark~storage function} is a continuous definite map $\varphi : X \to \R$, i.e. $\varphi(x) = 0$ iff $x=x_0$.
  \footnotetext{The term ``storage function'' is from the stability theory literature. The adjective ``local'' is used here to mean we dropped the condition of \emph{radial unboundedness}.}
  A \textbf{$\K$ function} is a strictly increasing local storage function $\R^+ \to \R$.
  A \textbf{$\K_\infty$ function} is an unbounded $\K$ function. We also define $\K_\infty^0 := \K_{\infty} \cup \{0\}$ to include the constant function at $0$; we note that $\K_{\infty}^0$ is a monoid under addition (whereas $\K_{\infty}$ is only a semigroup).
\end{definition}

It is important to note that every local storage function can be approximated above and below by \emph{radially symmetric} (meaning they factor through the distance function $|x_0 - (-)| = |-|$) $\K$ functions:

\begin{lemma}[{\cite[Corollary~A.23]{mironchenkoInputtoStateStabilityTheory2023}}]
  \label{lemma:k-approx}
  If $\varphi$ is local storage, there are $\varphi^+, \varphi^- \in \K$ which are radially symmetric and such that $\varphi^- \leq \varphi \leq \varphi^+$, namely
  \begin{equation}
    \varphi^+(x) = \sup_{|x'| \leq |x|} \varphi(x'),
    \qquad
    \varphi^-(x) = \inf_{|x'| \geq |x|} \varphi(x').
  \end{equation}
  Moreover, if $\varphi$ is unbounded, so are $\varphi^+$ and $\varphi^-$ (i.e. $\varphi^+, \varphi^- \in \K_\infty$).
\end{lemma}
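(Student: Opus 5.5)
The plan is to reduce everything to the radial profiles
$$\Phi^+(r) := \sup_{|x'| \leq r} \varphi(x'), \qquad \Phi^-(r) := \inf_{|x'| \geq r} \varphi(x'),$$
so that $\varphi^\pm(x) = \Phi^\pm(|x|)$ by definition. Radial symmetry is then immediate, and so is the sandwich: $x$ itself lies in both index sets, so $\Phi^-(|x|) \leq \varphi(x) \leq \Phi^+(|x|)$. What remains is to show that $\Phi^\pm$ are $\K$ functions, i.e. continuous, definite and strictly increasing on $\R^+$. Throughout I work in the restricted ball, or the whole of $X$, where the suprema are finite and the infima are taken over a nonempty set. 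This is the ``local'' reading of the statement, and it is how the cited corollary should be understood.

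\textbf{Monotonicity and definiteness.} Monotonicity comes from the index sets: $\{|x'| \leq r\}$ grows with $r$, so $\Phi^+$ is nondecreasing, and $\{|x'| \geq r\}$ shrinks with $r$, so $\Phi^-$ is nondecreasing. For definiteness, first $\Phi^+(0) = \varphi(0) = 0$. Next, $\Phi^-(0) = \inf \varphi = 0$, because $\varphi \geq 0$ with $\varphi(0) = 0$. (I take storage functions to be nonnegative, as the sandwich $\varphi^- \le \varphi$ with $\varphi^- \in \K$ forces.) For $r > 0$ we get $\Phi^+(r) \geq \varphi(x') > 0$ for any $x' \neq 0$ in the ball. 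Positivity of $\Phi^-(r)$ for $r > 0$ is the delicate part. On each compact annulus $\{r \leq |x'| \leq R\}$, continuity and definiteness give a strictly positive minimum. Beyond $R$ one needs $\varphi$ to stay bounded away from $0$. This holds either because the domain is bounded (the local setting) or because $\varphi$ is unbounded and hence, as I would argue in the usual ISS convention, bounded below at infinity.

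\textbf{Continuity.} For $\Phi^+$, I use uniform continuity of $\varphi$ on compact balls: a point with $|x'| \leq r + \delta$ is within $\delta$ of a point with $|x'| \leq r$, so $\Phi^+(r + \delta) - \Phi^+(r)$ is at most the modulus of continuity at $\delta$. For $\Phi^-$ I argue symmetrically, using that the infimum is attained on a compact annulus near radius $r$ once the tail is excluded by the lower bound from the previous step.

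\textbf{Strict monotonicity and unboundedness.} The displayed envelopes need only be nondecreasing; they may be constant on intervals where $\varphi$ plateaus. This is the main obstacle to reading the statement literally. I would handle it by the standard perturbation, which preserves the sandwich: replace $\Phi^+$ by $\Phi^+(r) + r$, and $\Phi^-$ by $\Phi^-(r)\, r/(1+r)$. These are strictly increasing, still continuous and definite, and still satisfy $\le$ / $\ge$ with $\varphi$. Finally, suppose $\varphi$ is unbounded. Then $\Phi^+ \geq \varphi$ is unbounded. For $\Phi^-$, I would use that $\varphi(x') \to \infty$ as $|x'| \to \infty$ (radial unboundedness), so $\inf_{|x'| \geq r} \varphi \to \infty$; the damping factor tends to $1$ and does not spoil this. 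Hence both envelopes lie in $\K_\infty$.
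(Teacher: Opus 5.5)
The paper gives no proof of its own here; it only cites Mironchenko. Your route (radial envelopes $\Phi^\pm$, monotonicity from nested index sets, continuity from uniform continuity on compacta, then a perturbation to force strictness) is the standard textbook argument. You are right that the displayed formulas cannot be taken literally. For $\varphi(x)=\min(|x|,1)$ on $\R$ both envelopes equal $\min(r,1)$, which is not strictly increasing. Your replacements $\Phi^+(r)+r$ and $\Phi^-(r)\,r/(1+r)$ are correct fixes; the latter is strictly increasing precisely because $\Phi^->0$ on $(0,\infty)$.

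That positivity of $\Phi^-(r)$ for $r>0$ is the weak point. Strictness, attainment of the infimum (used for continuity) and the $\K$ claim all rest on it, and neither of your two justifications is valid.

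\textbf{Bounded domain.} A bounded domain is not enough. On the open unit ball, $\varphi(x)=|x|^2(1-|x|)$ is continuous, definite and nonnegative, yet $\inf_{|x'|\geq r}\varphi(x')=0$ for every $r$, because $\varphi\to 0$ at the boundary. What you need is compactness. Restrict to a closed ball $\overline{B_R}\subseteq X$, so that $\Phi^-(r)=\min_{r\leq|x'|\leq R}\varphi(x')$ is a minimum over a compact annulus and there is nothing ``beyond $R$'' to control. The sandwich then holds only on $\overline{B_R}$, and you extend $\Phi^\pm$ past $R$ by any strictly increasing continuation to get functions on $\R^+$.

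\textbf{Unboundedness.} Unboundedness does not make $\varphi$ stay away from $0$ at infinity. The function $\varphi(x,y)=x^2+y^2/(1+y^4)$ on $\R^2$ is unbounded but tends to $0$ along the $y$-axis. Hence $\varphi^-\equiv 0$, and in fact no $\K$ function $\psi$ satisfies $\psi(|p|)\leq\varphi(p)$ on all of $\R^2$. So the ``moreover'' clause is false as stated, and no argument can rescue it.

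The correct hypothesis is radial unboundedness, and it is also necessary, since a $\K_\infty$ lower bound forces $\varphi(p)\to\infty$ as $|p|\to\infty$. This is exactly what you invoke in your last paragraph. It also supplies the positivity of $\Phi^-$ and the compact-annulus attainment you used earlier. You should state it explicitly as a strengthening of the hypothesis, rather than derive it from unboundedness ``in the usual ISS convention''.

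With these two corrections (closed ball in the local case, radial unboundedness in the global case), your proof goes through.
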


\begin{remark}\label[remark]{rmk:k-funs}
  We interpret $\K$ and $\K_\infty$ functions as describing radially symmetric homeomorphisms $f:B_r(x_0) \isoto B_r'(y_0)$ between balls of any dimension.
  That is, every such homeomorphism is determined by what it does on radii, and such a `stretching schedule' is easily seen to be a $\K$ function when $r < \infty$ and $\K_\infty$ functions when $r=\infty$.
  \emph{Vice versa}, every $\K$ function $\kappa$ induces radially symmetric homeomorphisms $B_r(x_0) \isoto B_{\kappa(r)}(y_0)$.
\end{remark}

In light of this, it is evident why the following definition is motivated as a definition of exponential stability \emph{invariant under non-linear change of variables} (cf. \cite[§2.8]{sontagInputStateStability2008}):

\begin{definition}\label[definition]{def:iss}
  We say $(X,x_0,u)$ is \textbf{ISS} when there exist $\kappa_1, \kappa_2, \kappa_3 \in \K_\infty$ such that, for every $a:\R^+ \to A$ and trajectory $x:\R^+ \to X$,
  \begin{equation}\label{eq:exp-stab}
    \forall t \geq 0, \quad
    |x(t, a)| \leq \kappa_1(\kappa_2(|x(0)|)e^{-t}) + \kappa_3(\|a\|_\infty).
  \end{equation}
  where $\|a\|_\infty = \sup_{t \geq 0} |a(t)|$.
  This definition reduces to \textbf{local ISS} (\textbf{LISS}) when both $X$ and $A$ are bounded sets, and to \textbf{global ISS} (or just \textbf{ISS}) otherwise, though note that standard global ISS is stated for $A$ and $X$ unbounded.
\end{definition}

The notion of ISS stability is originally due to Sontag \cite{sontagSmoothStabilizationImplies1989}.
It is one of the most important notions of stability used in control theory, since it is both practically relevant and theoretically convenient.
Two aspects of ISS are often praised: the fact it applies to non-linear systems as much as linear ones and that it admits a so-called \emph{dissipative} characterization.
This means one can certify ISS by exhibiting a suitable \textbf{Lyapunov function} directly on the system of ODEs, rather than solving the equation first and verifying \cref{def:iss} on the trajectories.
This is a powerful property, as it is much easier to certify behaviour \emph{at the system level} rather than first find its solutions (which might not have a closed form) and certify those.

\begin{theorem}[LISS Lyapunov]
\label{th:lyapunov-iss}
  The equilibrium of open ODE $(X,0, u)$ is \emph{local} ISS precisely when there exists a differentiable local storage function $\varphi : X \to \R$, called the \textbf{LISS Lyapunov function}, as well as $\kappa_1, \kappa_2 \in \K$ such that
  \begin{equation}
    \text{for all $a\in A$, $x \in X$,\quad $\kappa_1(|a|) \geq \de \varphi(f(x,a)) + \kappa_2(\varphi(x))$}.
  \end{equation}
\end{theorem}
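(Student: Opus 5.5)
We prove the two implications separately, working with the definitions as stated: LISS means the trajectory estimate of \cref{def:iss} with $X$ and $A$ bounded, and an LISS Lyapunov function is a $C^1$ local storage function $\varphi$ with $\de\varphi(f(x,a)) \leq \kappa_1(|a|) - \kappa_2(\varphi(x))$.

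\textbf{Lyapunov $\Rightarrow$ LISS.} This is the standard dissipation argument.

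First, we use \cref{lemma:k-approx} to sandwich $\varphi$ between radially symmetric $\K$ functions. This gives $\psi_1(|x|) \leq \varphi(x) \leq \psi_2(|x|)$.

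Next, we fix an input $a(\cdot)$ and a trajectory $x(\cdot)$, and set $V(t) := \varphi(x(t))$. The chain rule gives
\begin{equation*}
\dot V \leq \kappa_1(\|a\|_\infty) - \kappa_2(V).
\end{equation*}

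We then split into two regimes:
\begin{itemize}
\item \emph{Large $V$.} Whenever $\kappa_2(V) \geq 2\kappa_1(\|a\|_\infty)$, we get $\dot V \leq -\tfrac12\kappa_2(V)$. The scalar comparison lemma then yields a $\mathcal{KL}$ bound $V(t) \leq \beta(V(0), t)$.
\item \emph{Small $V$.} The sublevel set $\{V \leq \kappa_2^{-1}(2\kappa_1(\|a\|_\infty))\}$ is forward invariant.
\end{itemize}
Combining the two regimes gives
\begin{equation*}
V(t) \leq \max\{\beta(V(0),t),\ \kappa_2^{-1}(2\kappa_1(\|a\|_\infty))\}.
\end{equation*}
Translating back through $\psi_1, \psi_2$ and using $\max \leq$ sum gives a bound of the form $\beta'(|x(0)|,t) + \kappa_3(\|a\|_\infty)$.

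Finally, we convert the $\mathcal{KL}$ function $\beta'$ into the exponential form $\kappa_1(\kappa_2(r)e^{-t})$ of \cref{def:iss}. We do this with Sontag's lemma on $\mathcal{KL}$ estimates, which says every $\mathcal{KL}$ function is dominated by $\alpha_1(\alpha_2(r)e^{-t})$. Because $X$ and $A$ are bounded, only finitely many radii matter. So all $\K$ functions can be extended to $\K_\infty$ functions, and the local versions suffice.

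\textbf{LISS $\Rightarrow$ Lyapunov.} This is the converse Lyapunov theorem, and it is the main obstacle. Our plan is to reduce to the literature rather than reprove it.

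First, we observe that the estimate in \cref{def:iss} is a particular $\mathcal{KL}+\K$ bound. It is therefore exactly the classical LISS estimate of \cite{mironchenkoInputtoStateStabilityTheory2023} once inputs are restricted to a bounded ball.

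Second, we invoke the classical result (Sontag--Wang; Mironchenko's monograph) that LISS implies the existence of a smooth LISS Lyapunov function in implication form.

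Third, we convert that implication-form Lyapunov function to the dissipative form used here. The classical statement is $|x| \geq \chi(|a|) \Rightarrow \de\varphi(f) \leq -\alpha(|x|)$. We obtain the dissipative form by taking a maximum of $\de\varphi(f(x,a)) + \kappa_2(\varphi(x))$ over the compact region $|x| \leq \chi(|a|)$. This step uses boundedness of $A$ and the $C^{1,1}$ regularity of $f$ to ensure continuity, and it defines the $\kappa_1$ function.

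Beyond that conversion, the remaining care is in checking that the regularity hypotheses of the cited converse theorem hold. These are local Lipschitz dynamics, which follow from $C^{1,1}$ maps, and measurable essentially bounded inputs.
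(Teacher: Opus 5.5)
The paper's proof is only a pointer to Sontag's Theorem~3.4, the global dissipation characterization on $\R^n$. There an ISS-Lyapunov function is proper and positive definite (so $\alpha_1(|x|)\le\varphi(x)\le\alpha_2(|x|)$ with $\alpha_i\in\K_\infty$), and all solutions live in $\R^n$. By working directly with the paper's bounded-domain definitions, you run into two genuine gaps that this citation sidesteps.

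\textbf{Lyapunov $\Rightarrow$ LISS.} Here \cref{lemma:k-approx} does not give you the lower half of the sandwich. On a bounded open $X$ a local storage function may tend to $0$ at points of $\partial X$. Then $\varphi^-(r)=\inf_{|x'|\ge r}\varphi(x')$ vanishes on a whole interval, and ``$V$ small $\Rightarrow$ $|x|$ small'' is lost. This is not a technicality. Take $X=(-1,1)\cup(2,3)$ (any bounded $A$), with $\varphi(x)=x^2$ resp.\ $(x-2)^2$ and $f(x,a)=-x$ resp.\ $-(x-2)$. The inequality holds with $\kappa_2(v)=2v$ and any $\kappa_1$. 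Yet for $a\equiv 0$ the trajectory $x(t)=2+\tfrac12 e^{-t}$ stays in $X$ with $|x(t)|\to 2$, contradicting \cref{def:iss}. Connected examples exist too: puncture a rectangle at a second nondegenerate zero of $\varphi$, keep the mountain-pass saddle outside, and take $f=-\nabla\varphi$. You need $\psi_1(|x|)\le\varphi(x)$ as an explicit hypothesis, as in Sontag's definition. With it, your comparison argument, the $\mathcal{KL}$-to-exponential lemma and routine truncation on bounded ranges give the standard proof.

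\textbf{LISS $\Rightarrow$ Lyapunov.} Your identification of \cref{def:iss} with the classical LISS estimate is false. \cref{def:iss} only constrains trajectories that happen to remain in $X$ for all $t\ge 0$. Classical LISS requires every solution with small initial state and small input to exist for all time and obey the estimate. For instance, $\dot x=x^2$ on $X=A=(-1,1)$ satisfies \cref{def:iss}. Its trajectories $\R^+\to X$ are exactly those with $x_0\le 0$, which obey the $\mathcal{KL}$ bound $|x_0|/(1+|x_0|t)$; those with $x_0>0$ leave $X$ in finite time. But at $a=0$ the required inequality forces $\varphi'<0$ on $(0,1)$, which is incompatible with $\varphi(0)=0<\varphi(x)$. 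So no converse theorem can be invoked, and the implication itself fails for the literal definition.

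Even granting classical LISS, local converse theorems produce $\varphi$ only on a small ball, with the inequality only for $|x|\le\rho_x$, $|a|\le\rho_a$. The statement asks for it on all of $X\times A$. Your ``maximum over the compact region $|x|\le\chi(|a|)$'' presupposes an implication-form estimate on all of $X\times A$. That is what the global theorem on $\R^n\times\R^m$ provides, and there the region is compact. In an open $X$, however, the set $\{x\in X:|x|\le\chi(|a|)\}$ need not be compact. To close this direction you must either adopt the global setting of the paper's citation, or recast both \cref{def:iss} and the Lyapunov inequality as genuinely local statements near the origin.
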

\begin{proof}
  This is \cite[Theorem~3.4]{sontagInputStateStability2008}.
  There is also noted that $\varphi$ can be chosen \emph{smooth}.
\end{proof}

\begin{corollary}[ISS Lyapunov]
  The equilibrium of open ODE $(X, 0, u)$, where $X$ is unbounded, is \emph{global} ISS if it is local ISS and $\varphi$ is \emph{radially unbounded}, that is, $\varphi(x) \to \infty$ as $|x| \to \infty$.
  Such a $\varphi$ is called \textbf{ISS Lyapunov}.
\end{corollary}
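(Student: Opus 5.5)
The plan is to deduce global ISS from \cref{th:lyapunov-iss} by upgrading its local comparison functions to $\K_\infty$ functions using radial unboundedness, and then running the standard comparison argument. The statement says ``if it is local ISS and $\varphi$ is radially unbounded.'' I read this as: the LISS Lyapunov function $\varphi$ provided by \cref{th:lyapunov-iss} is radially unbounded. The goal is to produce $\kappa_1,\kappa_2,\kappa_3 \in \K_\infty$ satisfying \eqref{eq:exp-stab} for all initial states in the unbounded $X$ and all essentially bounded inputs.

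\textbf{Step 1: sandwich $\varphi$.} By \cref{lemma:k-approx}, there are radially symmetric $\varphi^-, \varphi^+$ with $\varphi^-(|x|) \le \varphi(x) \le \varphi^+(|x|)$. Since $\varphi$ is unbounded, the same lemma places both in $\K_\infty$. This is the only place radial unboundedness enters, and it is exactly what makes the upper and lower bounds invertible on all of $\R^+$ rather than on a bounded range.

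\textbf{Step 2: differential inequality.} Along any trajectory with input $a$, set $V(t) = \varphi(x(t))$. The dissipation inequality of \cref{th:lyapunov-iss} gives
\[
\dot V \le -\kappa_2(V) + \kappa_1(\|a\|_\infty).
\]
Whenever $V \ge \kappa_2^{-1}(2\kappa_1(\|a\|_\infty))$, this yields $\dot V \le -\tfrac12 \kappa_2(V)$. Two consequences follow from the standard comparison lemma for scalar ODEs (cf.\ \cite{sontagInputStateStability2008}). First, the sublevel set $\{V \le \kappa_2^{-1}(2\kappa_1(\|a\|_\infty))\}$ is forward invariant. Second, outside it $V$ decays like $\beta(V(0),t)$, where $\beta$ is a $\K\mathcal{L}$ function obtained by solving $\dot y = -\tfrac12\kappa_2(y)$. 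Combining these gives
\[
V(t) \le \max\{\beta(V(0),t),\ \kappa_2^{-1}(2\kappa_1(\|a\|_\infty))\}.
\]
Radial unboundedness also guarantees that sublevel sets are bounded, so trajectories cannot escape in finite time and the estimate holds for all $t \ge 0$.

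\textbf{Step 3: return to norms.} Apply $(\varphi^-)^{-1}$ and $\varphi^+$ to obtain
\[
|x(t)| \le (\varphi^-)^{-1}\bigl(\beta(\varphi^+(|x(0)|),t)\bigr) + (\varphi^-)^{-1}\bigl(\kappa_2^{-1}(2\kappa_1(\|a\|_\infty))\bigr).
\]
The second term is a $\K$ function of $\|a\|_\infty$; bound it above by some $\kappa_3 \in \K_\infty$. The first term is a $\K\mathcal{L}$ bound. To reach the exact shape $\kappa_1(\kappa_2(r)e^{-t})$ required by \cref{def:iss}, I would invoke Sontag's lemma that every $\K\mathcal{L}$ function is dominated by one of the form $\alpha_1(\alpha_2(r)e^{-t})$ with $\alpha_i \in \K_\infty$ \cite[§2.8]{sontagInputStateStability2008}.

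\textbf{Main obstacle.} The delicate step is Step 2 in the global setting. The given $\kappa_1,\kappa_2$ are only in $\K$, and $\kappa_2$ may be bounded, in which case $\kappa_2^{-1}$ is not defined on all of $\R^+$. If that happens, I would first try to cite the global version of \cite[Theorem~3.4]{sontagInputStateStability2008}, which allows $\kappa_2$ to be taken in $\K_\infty$ (after a change of Lyapunov function $\rho\circ\varphi$) whenever $\varphi$ is radially unbounded. Alternatively, I would replace the max bound by an implication-form gain, using $\kappa_1 \in \K$ together with $\varphi^\pm \in \K_\infty$.
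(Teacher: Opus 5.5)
You follow the standard Sontag--Wang comparison argument in Steps 1--3; the paper itself gives no proof here, only the implicit appeal to \cite{sontagInputStateStability2008}. Your steps are fine once the decay rate $\kappa_2$ lies in $\K_\infty$. The genuine gap is exactly the obstacle you flag, and neither fallback can close it, because with $\kappa_2$ merely in $\K$ the statement is false. Take $X=A=\R$, $\dot x=-\tanh x+a$ and $\varphi(x)=\ln\cosh x$, which is definite, $C^{1,1}$ and radially unbounded. Since $\tanh^2 x=1-e^{-2\varphi(x)}$, you get $\de\varphi(f(x,a))+\kappa_2(\varphi(x))=a\tanh x\le |a|$ on all of $X\times A$, with $\kappa_1=\id$ and $\kappa_2(s)=1-e^{-2s}\in\K\setminus\K_\infty$. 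The system is also LISS, since its linearisation $\dot x=-x+a$ is exponentially stable. Yet for $a\equiv 2$ you have $\dot x\ge 1$, so $x(t)\to\infty$. No estimate of the form \eqref{eq:exp-stab} allows this, because its right-hand side is bounded in $t$ for fixed $x(0)$ and $\|a\|_\infty=2$.

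The example also shows why your two fixes fail. Sontag's result that $\kappa_2$ may be taken in $\K_\infty$ is the converse direction: it starts \emph{from} global ISS, so invoking it here is circular. No reparametrisation $\rho\circ\varphi$ can produce a $\K_\infty$ dissipation rate for a system that is not ISS, since the direct theorem would then make it ISS. Likewise, passing to an implication-form gain needs $\kappa_2^{-1}(c\,\kappa_1(r))$ to be defined for every $r$ and some $c>1$, i.e.\ $\sup\kappa_1<\sup\kappa_2$. The $\K_\infty$ sandwich $\varphi^\pm$ from Step 1 controls $\varphi$ against $|x|$ but says nothing about this.

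What is missing is a hypothesis, not a lemma. You must read the corollary with $\kappa_2\in\K_\infty$, as the paper does in \cref{cor:main} and as Sontag does when defining ISS-Lyapunov functions, or at least with $\sup\kappa_1<\sup\kappa_2$. Under that assumption Step 2 goes through as you wrote it (in the weaker case, use some $c\in(1,\sup\kappa_2/\sup\kappa_1)$ in place of $2$). The rest is correct up to routine details, such as replacing the merely non-decreasing $\varphi^-$ by a genuine $\K_\infty$ minorant before inverting it.
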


\begin{corollary}\label{cor:main}
  For the equilibrium of open ODE $(X, 0, u)$ to be LISS it suffices to exhibit, besides the differentiable local storage function $\varphi : X \to \R$, a local storage function $\alpha: A \to \R$ and $\kappa \in \K_\infty$ such that
  \begin{equation}
    \text{for all $a\in A$, $x \in X$,\quad $\alpha(a) \geq \de \varphi(f(x,a)) + \kappa(\varphi(x))$}.
  \end{equation}
\end{corollary}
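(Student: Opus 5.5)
We reduce \cref{cor:main} to \cref{th:lyapunov-iss} by producing, from the data $\varphi$, $\alpha$, $\kappa$, the two comparison functions $\kappa_1, \kappa_2 \in \K$ required there. The natural choice is $\kappa_2 := \kappa$, which is already in $\K_\infty \subseteq \K$ (restricted to $\R^+$). The remaining task is to dominate the local storage function $\alpha : A \to \R$ by a $\K$ function of $|a|$.

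For this we invoke \cref{lemma:k-approx}. Since $\alpha$ is local storage, it has a radially symmetric upper approximant $\alpha^+(a) = \sup_{|a'| \leq |a|} \alpha(a')$ with $\alpha^+ \in \K$ and $\alpha \leq \alpha^+$. Writing $\kappa_1(r) := \alpha^+(r)$ for the induced function of the radius, we get, for all $a \in A$ and $x \in X$,
\begin{equation*}
  \kappa_1(|a|) \geq \alpha(a) \geq \de\varphi(f(x,a)) + \kappa(\varphi(x)).
\end{equation*}
This is exactly the dissipation inequality of \cref{th:lyapunov-iss} with $\kappa_2 = \kappa$. That theorem then yields LISS of the equilibrium $0 \in X$. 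Here the sufficiency direction is all we need, which is the content of the cited result of Sontag.

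The only delicate step is the application of \cref{lemma:k-approx} in the local setting. $A$ is merely an open subset of some $\R^m$, so the supremum defining $\alpha^+$ is taken over $\{a' \in A : |a'| \leq |a|\}$. We must check two things about this supremum. First, it is finite: for LISS we work with $A$ bounded, and we may shrink to a closed ball around $0$ on which the continuous $\alpha$ is bounded. Second, the resulting function is genuinely strictly increasing and vanishes only at $0$, as required of a $\K$ function. If strict monotonicity fails (the supremum only gives a nondecreasing function), we replace $\alpha^+(r)$ by $\alpha^+(r) + r$. This is still an upper bound for $\alpha$, is now strictly increasing, continuous and definite, and so belongs to $\K$; the inequality above is preserved since we only increased the left-hand side. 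With this adjustment the reduction goes through.
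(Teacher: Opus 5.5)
Your proposal is correct and follows the same route as the paper: set $\kappa_2 := \kappa$, let $\kappa_1$ be the radial profile of the upper approximant $\alpha^+$ from \cref{lemma:k-approx} (so that $\kappa_1(|a|) \geq \alpha(a)$), and apply \cref{th:lyapunov-iss}. The paper's proof is a single line, and your extra care fills in details it leaves implicit: restricting to a closed ball contained in $A$ keeps the supremum finite and continuous, and adding $r$ is needed because $\alpha^+$ is in general only nondecreasing, not strictly increasing.
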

\begin{proof}
  In light of \cref{lemma:k-approx}, $\alpha^+$ is of the form $\lambda(|-|)$ for $\lambda \in \K$ and $\alpha^+ \geq \alpha$.
\end{proof}

We thus start to glimpse the form of the certified lenses from the previous sections.

\subsection{Certified equilibria of open ODEs}
\label{sec:cert-odes}
To capture LISS using the framework of certified Moore machines, we start by observing that `ODEs with a distinguished equilibrium point' are generalized Moore machines in the sense of \cref{defn:moore.machine}.
We define the appropriate tangency below; in this case, the section $T$ will actually be the tangent bundle.

\begin{construction}[Equilibria of open ODEs as generalized Moore machines]
  Consider again the category of pointed open manifolds $\Man_\bullet$.
  We equip it with a notion of ``bundles of actions'' given by pointed trivial topological bundles $\pi_1 :B \times F \to B$ with $F \in \Man$, and pointed fiberwise Lipschitz (``$C^{0,1}$'') maps between them.
  We thus get a (cartesian monoidal) fibration of bundles as below right:
  \begin{equation}\label{eq:man-bundles-fib}
    \begin{tikzcd}[ampersand replacement=\&]
      1 \& {B \times F} \&\& {B' \times F'} \\[-1ex]
      \& B \&\& {B'}
      \arrow["{(b_0, f_0)}", from=1-1, to=1-2]
      \arrow["{{b_0}}"', from=1-1, to=2-2]
      \arrow["{\text{fb.wise $C^{0,1}$}}", from=1-2, to=1-4]
      \arrow["\,\pi_1", two heads, from=1-2, to=2-2]
      \arrow[two heads, from=1-4, to=2-4]
      \arrow["{C^{1,1}}", from=2-2, to=2-4]
    \end{tikzcd}
    \qquad
    \begin{tikzcd}[ampersand replacement=\&]
      {\BunMan} \\[-1ex]
      {\Man_\bullet}
      \arrow["{{\,\partial_0}}", two heads, from=1-1, to=2-1]
    \end{tikzcd}
  \end{equation}
  We pick a tangency $T$ by defining it to be the usual tangent bundle assignment, which sends a manifold of dimension $n$ to $X \times \R^n \xto{\pi_X} X$ and a $C^{1,1}$ function to its fiberwise Lipschitz Jacobian. The total space $T(X,x_0)$ is pointed by the pair $(x_0, 0)$.
  We define the section $T$ to be the usual tangent bundle assignment, which sends an open manifold of dimension $n$ to $X \times \R^n$ and a $C^{1,1}$ function to its fiberwise Lipschitz Jacobian, The total space $T(X,x_0)$ is pointed by the pair $(x_0, 0)$.  We keep denoting this as $TX \subto X$.

  The generalized Moore machines associated to this tangency are \textbf{equilibria of open ODEs}.
  These are given by lenses $\lens{u}{v} : \lens{T_x X}{x:X} \lensto \lens{A}{O}$ which correspond to the differential equation
  $\dot x = u(x, a)$
  together with an observable
  $o = v(x)$
  and such that $u(x_0, a_0) = 0$, meaning $(x_0, a_0)$ is an equilibrium point for the ODE.
  Note an equilibrium of open ODE with interface the terminal bundle $1 = 1$ is a closed ODE with a distinguished equilibrium point $x_0 \in X$.

  A map of equilibria of open ODEs is a pointed $C^1$ function $f:X \to X'$ which commutes with the dynamics, a fact that can be witnessed by the existence of a square as such in $\Lens\BunMan$
  In this way we get symmetric monoidal loose right module $\ODE_\bullet$ over the symmetric monoidal double category $\Lens\BunMan$.
\end{construction}

\begin{remark}
  Our specific choices of open manifolds, bundles, and regularity hypotheses on the maps between them are, from the point of view of the categorical treatment below, not essential.
  The very formal nature of our constructions makes it easy to swap this tangency with a different one, according to the needs of the user.
  The setup here loosely tracks Sontag's in \cite{sontagInputStateStability2008} and related stability theory literature.
\end{remark}

We now define certified Moore machines over this tangency.
In \cref{sec:gen.moore.machine} we obtained the fibration of 2-algebras $\Moore(\acat{P}\pi, \Lift) \fibto \Moore(\pi, T)$ by applying $\Moore$ directly to a fibration in tangencies; here we directly define the double category of certified lenses and its algebra of certified Moore machines.
The full construction is in \cref{app:lyapunov-dbl-cat-const} and amounts to a monoidal double Grothendieck construction.
We comment there on why we cannot directy apply the methods of \cref{sec:gen.moore.machine} in this case.
Abstract nonsense notwithstanding, the 2-algebra we get is still very much of the same \emph{flavor}, and we describe it now.

On manifolds, our generalized predicates will be \textbf{$C^{1,1}$ local storage functions} \cite{sontagInputStateStability2008,mironchenkoInputtoStateStabilityTheory2023} (from here on, also just \emph{predicates}), ordered by pointwise inequality.
More generally, a map $f : ((X', x'_0), \varphi) \to ((X, x_0), \psi)$ consists of a $C^{1,1}$ map $f : X \to X$  for which $\varphi(x') \geq \psi(f(x'))$ for all $x' \in X'$.
Note we can take arbitrary sums of local storage functions.

As for predicates over bundles of actions, we start by equipping $T\R = \pi_1 : \R \times \R \to \R$ with the \emph{lexicographic} order on pairs:
\begin{equation}\label{eq:lexico}
  (a, a') \succeq (b, b')
  \qquad\text{iff}\qquad
    a \geq b \quad \text{and}\quad (a = b \ \ \text{implies}\ \ a' \geq b').
\end{equation}
Intuitively, the second coordinate is an infinitesimal displacement and therefore can never overtake a finite displacement.

Now a \textbf{local storage bundle function} (from here on, just \emph{bundle predicate}) over a bundle $X \times F \to X$ is just a map of pointed bundles as below left, and an inequality between them is diagram:

These inherit the pointwise lexicographic order from $T\R$.

\begin{definition}%
Consider the following double category $\CertLens_\R$:
\begin{enumerate}
  \item The objects and tight maps are bundle predicates and their inequations, as in \eqref{eq:bund-preds}.
  \item The loose maps consist of lenses $\lens{w^\sharp}{w} : \lens{A_1}{O_1} \lensto \lens{A_2}{O_2}$ for the fibration of bundles over manifolds described in \cref{eq:man-bundles-fib} together with a $\K_{\infty}^0$ function $\kappa$ (which we call the \textbf{slack}) which together satisfy the following pair of conditions (the `certification' part):
  \begin{equation}
    \begin{cases}
      \alpha_2(w(o_1), a_2) + \kappa(\gamma_1(o_1)) \geq \alpha_1(w^\sharp(o_1, a_2))\\
      \gamma_1(o_1) \geq \gamma_2(w(o_1))
    \end{cases}
  \end{equation}
  We denote quantitatively certified lenses as we did in the previous sections, but annotated by $\kappa$:
  \begin{equation}
    \lens{w^\sharp}{w} \models \lens{\alpha_1}{\gamma_1} \overset{\kappa}{\lensto} \lens{\alpha_2}{\gamma_2}.
  \end{equation}
  These compose by lens composition and by addition of the slacks.
  \item There is a square of a given signature precisely when the underlying uncertified lenses and maps form a square of uncertified lenses, and the slack of the top lens dominates that of the bottom.
\end{enumerate}
\end{definition}

\begin{definition}\label{def:cert-ode}
  A \textbf{certified ODE} is a lens of the form $\lens{u}{v} : \lens{TX}{X} \lensto \lens{A}{O}$ certified by $\lens{\de \varphi + \varphi}{\varphi} \lensto \lens{\alpha}{\gamma}$.
\end{definition}

Note that if $\varphi \geq \psi :X \to \R$, then $(\varphi, \de \varphi + \varphi) \geq (\psi, \de \psi + \psi)$ \emph{lexicographically} but not with the product order.
This makes $\CertODE$ a well-defined 2-algebra over $\CertLens_\R$, by composition on the left, very reminiscent of a module of Moore machines.
It is moreover fibred over $\ODE_\bullet$.

\subsection{Lyapunov functions as certified open ODEs}
We have now the means to reformulate \cref{cor:main} as saying that \textbf{an equilibrium of open ODE is local ISS if and only if it is a certified ODE (in the sense of \cref{def:cert-ode})} of the form
\vspace{-.5ex}
\begin{equation}\label{eq:liss-ode}
  \lens{u}{v} \vDash \lens{\de \varphi + \varphi}{\varphi} \overset{\lambda}\lensto \lens{\alpha\pi_2}{\gamma}.
\end{equation}
\vspace{-1ex}
where
\begin{enumerate}
  \item $\alpha$ is a local storage function $A \to \R$, that is, it cannot depend on its first argument,
  \item the slack $\lambda$ is such that $\id-\lambda \in \K_\infty$ --- note that $\lambda=0$ has this property.
\end{enumerate}
Indeed, under these assumptions the certification amounts to
\begin{equation}
  \begin{cases}
    \alpha(a) \geq \de \varphi(f(x,a)) + (\id-\lambda)(\varphi(x))\\
    \varphi(x) \geq \gamma(v(x))
  \end{cases}
\end{equation}
We also see that then global ISS systems are those for which $\gamma$ is unbounded and $v$ has unbounded image.

\begin{remark}
  By traslation, we obtain similar characterizations even when the distinguished equilibrium $x_0 \neq 0$ --- we fixed $x_0=0$ only for convenience.
  Also, in light of \cref{rmk:k-funs} we can see that, up to change of variables \emph{on the interface}, we can always consider $\gamma$ and $\alpha$ to be the bare norm $|-|$.
\end{remark}

We can also revisit the two key properties of ISS, non-linearity and the dissipative characterization, and note they fit very naturally in our framework since they correspond to the soundness of the proof rules \ref{rule1} and \ref{rule2} in the 2-algebra of certified Moore machines, specifically as constructed in \cref{sec:cert-odes}.

Indeed, the fibrancy of $\CertODE_\bullet$ over $\ODE_\bullet$ corresponds to the stability of ISS under change of variables.
What this means is that if $X$ and $X'$ are equilibria of open ODEs and $X' \cong X$ is a morphism, then if $X'$ is ISS then so is $X$.
In fact, this works for any simulation $X' \to X$, not necessarily invertible ones.

As for compositionality, this straight up holds by construction, though for it to preserve ISS we must ensure the two aforementioned conditions are met, that is (1) the lens we use to compose the systems must have outer interface certified by an assumption of the form $\alpha\pi_2$ and (2) there is a non-trivial check on $\lambda$ --- we must have `enough slack' to pull through composition.

We illustrate it with an example:

\begin{example}
  In \cite[§4]{sontagInputStateStability2008}, it is shown that ISS systems compose sequentially. %
  Let $\lens{u_1}{v_1}$ and $\lens{u_2}{v_2}$ be certified ODEs with interface $\lens{A}{M} \vDash \lens{\alpha_1\pi_2}{\gamma_1}$ and $\lens{M}{O} \vDash \lens{\alpha_2\pi_2}{\gamma_2}$, respectively, and with slacks $\lambda_1$ and $\lambda_2$.
  The lens for sequential composition is $\lens{s^\sharp}{s} : \lens{A \times M}{M \times O} \lensto \lens{A}{O}$ defined as $s(m,o) = o$ and $s^\sharp(m,o,a) = (a,m)$ --- it is in fact a special case of feedback wiring.
  The canonical certification on $\lens{s^\sharp}{s}$ is as follows.
  First, we choose a slack $\kappa$ such that $\kappa(\gamma_1) \geq \alpha_1$ --- that is, we need the assumption on the first composee to imply the assumption on the second to which it feeds in.
  We have
  \begin{equation}
    \lens{s^\sharp}{s} \vDash \lens{(\alpha_1 + \alpha_2)\pi_2}{\gamma_1 + \gamma_2} \overset{\kappa}\lensto \lens{\alpha_1\pi_2}{\gamma_2}
  \end{equation}
  So we see that in order to proceed with composition we must have $\gamma_1(m)+\gamma_2(o) \geq \gamma_2(o)$ --- always true --- and $\alpha_2(m) + \kappa(\gamma_1(m) + \gamma_2(o)) \geq \alpha_2(m) + \kappa(\gamma_1(m)) \geq \alpha_1(a) + \alpha_2(m)$ --- true by assumption.
\end{example}

\section{Conclusion and Future Work}

We have described a general framework for assume-guarantee reasoning of state safety predicates for generalized Moore machines, as well as a special case adapted to LISS Lyapunov functions. We aim to expand on this work in a few ways.
First, we hope to further develop the various examples covered by our framework, most especially POMDPs.
Second, and more decisively, we aim to expand beyond state safety predicates to general (possibly quantitative) $\omega$-regular predicates of traces using supermartingale certificates \cite{abate2024stochasticomegaregularverificationcontrol, abate2025quantitativesupermartingalecertificates,henzinger2025supermartingalecertificatesquantitativeomegaregular}.
This more general verification works by coupling systems with a B\"uchi or Streett automaton that recognizes the predicate, and then giving a Lyapunov function on the coupled system; in fact, it is possible to express these ``coupled Lyapunov functions'' as morphisms on the original, uncoupled system \cite{Myers_djm00CP} (potentially leaving it open for assume-guarantee reasoning).
We believe the theory here can be extended thereby to give assume-guarantee reasoning for these general supermartingale certificates.

\bibliographystyle{eptcs}
\bibliography{main}

\begin{thebibliography}{10}
\providecommand{\bibitemdeclare}[2]{}
\providecommand{\surnamestart}{}
\providecommand{\surnameend}{}
\providecommand{\urlprefix}{Available at }
\providecommand{\url}[1]{\texttt{#1}}
\providecommand{\href}[2]{\texttt{#2}}
\providecommand{\urlalt}[2]{\href{#1}{#2}}
\providecommand{\doi}[1]{doi:\urlalt{https://doi.org/#1}{#1}}
\providecommand{\eprint}[1]{arXiv:\urlalt{https://arxiv.org/abs/#1}{#1}}
\providecommand{\bibinfo}[2]{#2}

\bibitemdeclare{misc}{abate2024stochasticomegaregularverificationcontrol}
\bibitem{abate2024stochasticomegaregularverificationcontrol}
\bibinfo{author}{Alessandro \surnamestart Abate\surnameend},
  \bibinfo{author}{Mirco \surnamestart Giacobbe\surnameend} \&
  \bibinfo{author}{Diptarko \surnamestart Roy\surnameend}
  (\bibinfo{year}{2024}): \emph{\bibinfo{title}{Stochastic Omega-Regular
  Verification and Control with Supermartingales}}.
\newblock \eprint{2405.17304}.

\bibitemdeclare{misc}{abate2025quantitativesupermartingalecertificates}
\bibitem{abate2025quantitativesupermartingalecertificates}
\bibinfo{author}{Alessandro \surnamestart Abate\surnameend},
  \bibinfo{author}{Mirco \surnamestart Giacobbe\surnameend} \&
  \bibinfo{author}{Diptarko \surnamestart Roy\surnameend}
  (\bibinfo{year}{2025}): \emph{\bibinfo{title}{Quantitative Supermartingale
  Certificates}}.
\newblock \eprint{2504.05065}.

\bibitemdeclare{misc}{ames2025categoricallyapunovtheoryii}
\bibitem{ames2025categoricallyapunovtheoryii}
\bibinfo{author}{Aaron~D. \surnamestart Ames\surnameend},
  \bibinfo{author}{Sébastien \surnamestart Mattenet\surnameend} \&
  \bibinfo{author}{Joe \surnamestart Moeller\surnameend}
  (\bibinfo{year}{2025}): \emph{\bibinfo{title}{Categorical Lyapunov Theory II:
  Stability of Systems}}.
\newblock \eprint{2505.22968}.

\bibitemdeclare{misc}{ames2025categoricallyapunovtheoryi}
\bibitem{ames2025categoricallyapunovtheoryi}
\bibinfo{author}{Aaron~D. \surnamestart Ames\surnameend}, \bibinfo{author}{Joe
  \surnamestart Moeller\surnameend} \& \bibinfo{author}{Paulo \surnamestart
  Tabuada\surnameend} (\bibinfo{year}{2025}): \emph{\bibinfo{title}{Categorical
  Lyapunov Theory I: Stability of Flows}}.
\newblock \eprint{2502.15276}.

\bibitemdeclare{misc}{arkor2024enhanced2categoricalstructurestwodimensional}
\bibitem{arkor2024enhanced2categoricalstructurestwodimensional}
\bibinfo{author}{Nathanael \surnamestart Arkor\surnameend},
  \bibinfo{author}{John \surnamestart Bourke\surnameend} \&
  \bibinfo{author}{Joanna \surnamestart Ko\surnameend} (\bibinfo{year}{2024}):
  \emph{\bibinfo{title}{Enhanced 2-categorical structures, two-dimensional
  limit sketches and the symmetry of internalisation}}.
\newblock \eprint{2412.07475}.

\bibitemdeclare{article}{bakirtzisCategoricalSemantics}
\bibitem{bakirtzisCategoricalSemantics}
\bibinfo{author}{Georgios \surnamestart Bakirtzis\surnameend},
  \bibinfo{author}{Cody~H. \surnamestart Fleming\surnameend} \&
  \bibinfo{author}{Christina \surnamestart Vasilakopoulou\surnameend}
  (\bibinfo{year}{2021}): \emph{\bibinfo{title}{Categorical Semantics of
  Cyber-Physical Systems Theory}}.
\newblock {\slshape \bibinfo{journal}{ACM Transactions on Cyber-Physical
  Systems,}} \bibinfo{volume}{5}(\bibinfo{number}{3}), \doi{10.1145/3461669}.
\newblock \urlprefix\url{https://doi.org/10.1145/3461669}.

\bibitemdeclare{inproceedings}{BobaruPasareanuGiannakopoulou2008}
\bibitem{BobaruPasareanuGiannakopoulou2008}
\bibinfo{author}{Mihaela~Gheorghiu \surnamestart Bobaru\surnameend},
  \bibinfo{author}{Corina~S. \surnamestart P{\u{a}}s{\u{a}}reanu\surnameend} \&
  \bibinfo{author}{Dimitra \surnamestart Giannakopoulou\surnameend}
  (\bibinfo{year}{2008}): \emph{\bibinfo{title}{Automated Assume-Guarantee
  Reasoning by Abstraction Refinement}}.
\newblock In: {\slshape \bibinfo{booktitle}{Proceedings of the 20th
  International Conference on Computer Aided Verification (CAV 2008)}},
  {\slshape \bibinfo{series}{Lecture Notes in Computer Science}}
  \bibinfo{volume}{5123}, \bibinfo{publisher}{Springer}, pp.
  \bibinfo{pages}{135--148}, \doi{10.1007/978-3-540-70545-1_14}.

\bibitemdeclare{inproceedings}{CobleighGiannakopoulouPasareanu2003}
\bibitem{CobleighGiannakopoulouPasareanu2003}
\bibinfo{author}{Jamieson~M. \surnamestart Cobleigh\surnameend},
  \bibinfo{author}{Dimitra \surnamestart Giannakopoulou\surnameend} \&
  \bibinfo{author}{Corina~S. \surnamestart P{\u {a}}s{\u{a}}reanu\surnameend}
  (\bibinfo{year}{2003}): \emph{\bibinfo{title}{Learning Assumptions for
  Compositional Verification}}.
\newblock In: {\slshape \bibinfo{booktitle}{Proceedings of the 9th
  International Conference on Tools and Algorithms for the Construction and
  Analysis of Systems (TACAS 2003)}}, {\slshape \bibinfo{series}{Lecture Notes
  in Computer Science}} \bibinfo{volume}{2619}, \bibinfo{publisher}{Springer},
  pp. \bibinfo{pages}{331--346}, \doi{10.1007/3-540-36577-X_24}.

\bibitemdeclare{article}{CobleighGiannakopoulouPasareanuBarringer2008}
\bibitem{CobleighGiannakopoulouPasareanuBarringer2008}
\bibinfo{author}{Jamieson~M. \surnamestart Cobleigh\surnameend},
  \bibinfo{author}{Dimitra \surnamestart Giannakopoulou\surnameend},
  \bibinfo{author}{Corina~S. \surnamestart P{\u {a}}s{\u{a}}reanu\surnameend}
  \& \bibinfo{author}{Howard \surnamestart Barringer\surnameend}
  (\bibinfo{year}{2008}): \emph{\bibinfo{title}{Learning to Divide and Conquer:
  Applying the {L}* Algorithm to Automate Assume-Guarantee Reasoning}}.
\newblock {\slshape \bibinfo{journal}{Formal Methods in System Design}}
  \bibinfo{volume}{32}(\bibinfo{number}{3}), pp. \bibinfo{pages}{175--205},
  \doi{10.1007/s10703-008-0049-6}.

\bibitemdeclare{article}{cruttwell2022doublefibrations}
\bibitem{cruttwell2022doublefibrations}
\bibinfo{author}{Geoffrey S.~H. \surnamestart Cruttwell\surnameend},
  \bibinfo{author}{Michael \surnamestart Lambert\surnameend},
  \bibinfo{author}{Dorette \surnamestart Pronk\surnameend} \&
  \bibinfo{author}{Martin \surnamestart Szyld\surnameend}
  (\bibinfo{year}{2022}): \emph{\bibinfo{title}{Double Fibrations}}.
\newblock {\slshape \bibinfo{journal}{Theory and Applications of Categories}}
  \bibinfo{volume}{38}(\bibinfo{number}{35}), pp. \bibinfo{pages}{1326--1394},
  \doi{10.48550/arXiv.2205.15240}.
\newblock \urlprefix\url{https://arxiv.org/abs/2205.15240}.

\bibitemdeclare{inproceedings}{ulrichPredicateLifting}
\bibitem{ulrichPredicateLifting}
\bibinfo{author}{Ulrich \surnamestart Dorsch\surnameend},
  \bibinfo{author}{Stefan \surnamestart Milius\surnameend},
  \bibinfo{author}{Lutz \surnamestart Schr{\"o}der\surnameend} \&
  \bibinfo{author}{Thorsten \surnamestart Wi{\ss}mann\surnameend}
  (\bibinfo{year}{2018}): \emph{\bibinfo{title}{Predicate Liftings and Functor
  Presentations in Coalgebraic Expression Languages}}.
\newblock In \bibinfo{editor}{Corina \surnamestart C{\^i}rstea\surnameend},
  editor: {\slshape \bibinfo{booktitle}{Coalgebraic Methods in Computer
  Science}}, \bibinfo{publisher}{Springer International Publishing},
  \bibinfo{address}{Cham}, pp. \bibinfo{pages}{56--77}.

\bibitemdeclare{inproceedings}{geatti2025automata}
\bibitem{geatti2025automata}
\bibinfo{author}{Luca \surnamestart Geatti\surnameend} (\bibinfo{year}{2025}):
  \emph{\bibinfo{title}{Automata Cascades for Model Checking}}.
\newblock In \bibinfo{editor}{Angelo \surnamestart Montanari\surnameend},
  \bibinfo{editor}{Andrea \surnamestart Orlandini\surnameend},
  \bibinfo{editor}{Nicola \surnamestart Saccomanno\surnameend} \&
  \bibinfo{editor}{Stefano \surnamestart Tonetta\surnameend}, editors:
  {\slshape \bibinfo{booktitle}{Short Paper Proceedings of the 7th
  International Workshop on Artificial Intelligence and Formal Verification,
  Logic, Automata , and Synthesis, {OVERLAY} 2025, Bologna, Italy, October 26,
  2025}}, {\slshape \bibinfo{series}{CEUR Workshop Proceedings}}
  \bibinfo{volume}{4142}, \bibinfo{publisher}{CEUR-WS.org}, pp.
  \bibinfo{pages}{49--57}.
\newblock \urlprefix\url{https://ceur-ws.org/Vol-4142/paper6.pdf}.

\bibitemdeclare{article}{gumbergModelChecking}
\bibitem{gumbergModelChecking}
\bibinfo{author}{Orna \surnamestart Grumberg\surnameend} \&
  \bibinfo{author}{David~E. \surnamestart Long\surnameend}
  (\bibinfo{year}{1994}): \emph{\bibinfo{title}{Model checking and modular
  verification}}.
\newblock {\slshape \bibinfo{journal}{ACM Trans. Program. Lang. Syst.}}
  \bibinfo{volume}{16}(\bibinfo{number}{3}), p. \bibinfo{pages}{843–871},
  \doi{10.1145/177492.177725}.
\newblock \urlprefix\url{https://doi.org/10.1145/177492.177725}.

\bibitemdeclare{article}{GuptaMcMillanFu2008}
\bibitem{GuptaMcMillanFu2008}
\bibinfo{author}{Anubhav \surnamestart Gupta\surnameend},
  \bibinfo{author}{Kenneth~L. \surnamestart McMillan\surnameend} \&
  \bibinfo{author}{Zhaohui \surnamestart Fu\surnameend} (\bibinfo{year}{2008}):
  \emph{\bibinfo{title}{Automated Assumption Generation for Compositional
  Verification}}.
\newblock {\slshape \bibinfo{journal}{Formal Methods in System Design}}
  \bibinfo{volume}{32}(\bibinfo{number}{3}), pp. \bibinfo{pages}{285--301},
  \doi{10.1007/s10703-008-0050-0}.

\bibitemdeclare{misc}{haugseng2023twovariablefibrationsfactorisationsystems}
\bibitem{haugseng2023twovariablefibrationsfactorisationsystems}
\bibinfo{author}{Rune \surnamestart Haugseng\surnameend},
  \bibinfo{author}{Fabian \surnamestart Hebestreit\surnameend},
  \bibinfo{author}{Sil \surnamestart Linskens\surnameend} \&
  \bibinfo{author}{Joost \surnamestart Nuiten\surnameend}
  (\bibinfo{year}{2023}): \emph{\bibinfo{title}{Two-variable fibrations,
  factorisation systems and $\infty $-categories of spans}}.
\newblock \eprint{2011.11042}.

\bibitemdeclare{misc}{henzinger2025supermartingalecertificatesquantitativeomegaregular}
\bibitem{henzinger2025supermartingalecertificatesquantitativeomegaregular}
\bibinfo{author}{Thomas~A. \surnamestart Henzinger\surnameend},
  \bibinfo{author}{Kaushik \surnamestart Mallik\surnameend},
  \bibinfo{author}{Pouya \surnamestart Sadeghi\surnameend} \&
  \bibinfo{author}{\surnamestart Đorđe Žikelić\surnameend}
  (\bibinfo{year}{2025}): \emph{\bibinfo{title}{Supermartingale Certificates
  for Quantitative Omega-regular Verification and Control}}.
\newblock \eprint{2505.18833}.

\bibitemdeclare{article}{HenzingerQadeerRajamaniTasiran2002}
\bibitem{HenzingerQadeerRajamaniTasiran2002}
\bibinfo{author}{Thomas~A. \surnamestart Henzinger\surnameend},
  \bibinfo{author}{Shaz \surnamestart Qadeer\surnameend},
  \bibinfo{author}{Sriram~K. \surnamestart Rajamani\surnameend} \&
  \bibinfo{author}{Serdar \surnamestart Ta{\c{s}}{\i}ran\surnameend}
  (\bibinfo{year}{2002}): \emph{\bibinfo{title}{An Assume-Guarantee Rule for
  Checking Simulation}}.
\newblock {\slshape \bibinfo{journal}{ACM Transactions on Programming Languages
  and Systems}} \bibinfo{volume}{24}(\bibinfo{number}{1}), pp.
  \bibinfo{pages}{51--64}, \doi{10.1145/509705.509707}.

\bibitemdeclare{article}{HERMIDA199983}
\bibitem{HERMIDA199983}
\bibinfo{author}{Claudio \surnamestart Hermida\surnameend}
  (\bibinfo{year}{1999}): \emph{\bibinfo{title}{Some properties of Fib as a
  fibred 2-category}}.
\newblock {\slshape \bibinfo{journal}{Journal of Pure and Applied Algebra}}
  \bibinfo{volume}{134}(\bibinfo{number}{1}), pp. \bibinfo{pages}{83--109},
  \doi{https://doi.org/10.1016/S0022-4049(97)00129-1}.
\newblock
  \urlprefix\url{https://www.sciencedirect.com/science/article/pii/S0022404997001291}.

\bibitemdeclare{book}{Jacobs_2016}
\bibitem{Jacobs_2016}
\bibinfo{author}{Bart \surnamestart Jacobs\surnameend} (\bibinfo{year}{2016}):
  \emph{\bibinfo{title}{Introduction to Coalgebra: Towards Mathematics of
  States and Observation}}.
\newblock \bibinfo{series}{Cambridge Tracts in Theoretical Computer Science},
  \bibinfo{publisher}{Cambridge University Press}.

\bibitemdeclare{article}{Jaz_Myers_2021}
\bibitem{Jaz_Myers_2021}
\bibinfo{author}{David \surnamestart Jaz~Myers\surnameend}
  (\bibinfo{year}{2021}): \emph{\bibinfo{title}{Double Categories of Open
  Dynamical Systems (Extended Abstract)}}.
\newblock {\slshape \bibinfo{journal}{Electronic Proceedings in Theoretical
  Computer Science}} \bibinfo{volume}{333}, p. \bibinfo{pages}{154–167},
  \doi{10.4204/eptcs.333.11}.
\newblock \urlprefix\url{http://dx.doi.org/10.4204/EPTCS.333.11}.

\bibitemdeclare{misc}{libkind2025doubleoperadictheorysystems}
\bibitem{libkind2025doubleoperadictheorysystems}
\bibinfo{author}{Sophie \surnamestart Libkind\surnameend} \&
  \bibinfo{author}{David~Jaz \surnamestart Myers\surnameend}
  (\bibinfo{year}{2025}): \emph{\bibinfo{title}{Towards a double operadic
  theory of systems}}.
\newblock \eprint{2505.18329}.

\bibitemdeclare{book}{mironchenkoInputtoStateStabilityTheory2023}
\bibitem{mironchenkoInputtoStateStabilityTheory2023}
\bibinfo{author}{Andrii \surnamestart Mironchenko\surnameend}
  (\bibinfo{year}{2023}): \emph{\bibinfo{title}{Input-to-{{State Stability}}:
  {{Theory}} and {{Applications}}}}.
\newblock \bibinfo{series}{Communications and {{Control Engineering}}},
  \bibinfo{publisher}{Springer International Publishing},
  \bibinfo{address}{Cham}, \doi{10.1007/978-3-031-14674-9}.
\newblock \urlprefix\url{https://link.springer.com/10.1007/978-3-031-14674-9}.

\bibitemdeclare{inbook}{mooreGedanken}
\bibitem{mooreGedanken}
\bibinfo{author}{Edward~F. \surnamestart Moore\surnameend}
  (\bibinfo{year}{2016}): \emph{\bibinfo{title}{Gedanken-Experiments on
  Sequential Machines}}, pp. \bibinfo{pages}{129--154}.
\newblock \bibinfo{publisher}{Princeton University Press},
  \bibinfo{address}{Princeton}, \doi{doi:10.1515/9781400882618-006}.
\newblock \urlprefix\url{https://doi.org/10.1515/9781400882618-006}.

\bibitemdeclare{misc}{Myers_djm00CP}
\bibitem{Myers_djm00CP}
\bibinfo{author}{David~Jaz \surnamestart Myers\surnameend}:
  \emph{\bibinfo{title}{On the representability of Lyapunov-type functions}}.
\newblock \urlprefix\url{https://forest.topos.site/public/djm-00CP/}.
\newblock \bibinfo{note}{Unpublished note}.

\bibitemdeclare{misc}{myers2021cartesianfactorizationsystemsgrothendieck}
\bibitem{myers2021cartesianfactorizationsystemsgrothendieck}
\bibinfo{author}{David~Jaz \surnamestart Myers\surnameend}
  (\bibinfo{year}{2021}): \emph{\bibinfo{title}{Cartesian Factorization Systems
  and Grothendieck Fibrations}}.
\newblock \eprint{2006.14022}.

\bibitemdeclare{misc}{jaz-2021-book}
\bibitem{jaz-2021-book}
\bibinfo{author}{David~Jaz \surnamestart Myers\surnameend}
  (\bibinfo{year}{2021}): \emph{\bibinfo{title}{Categorical Systems Theory}}.
\newblock \urlprefix\url{http://davidjaz.com/Papers/DynamicalBook.pdf}.

\bibitemdeclare{inproceedings}{PasareanuDwyerHuth1999}
\bibitem{PasareanuDwyerHuth1999}
\bibinfo{author}{Corina~S. \surnamestart P{\u{a}}s{\u{a}}reanu\surnameend},
  \bibinfo{author}{Matthew~B. \surnamestart Dwyer\surnameend} \&
  \bibinfo{author}{Michael \surnamestart Huth\surnameend}
  (\bibinfo{year}{1999}): \emph{\bibinfo{title}{Assume-Guarantee Model Checking
  of Software: A Comparative Case Study}}.
\newblock In: {\slshape \bibinfo{booktitle}{Proceedings of the 6th
  International SPIN Workshop on Theoretical and Practical Aspects of SPIN
  Model Checking}}, {\slshape \bibinfo{series}{Lecture Notes in Computer
  Science}} \bibinfo{volume}{1680}, \bibinfo{publisher}{Springer}, pp.
  \bibinfo{pages}{168--183}, \doi{10.1007/3-540-48234-2_14}.

\bibitemdeclare{inproceedings}{pneuliAssumeGuarantee}
\bibitem{pneuliAssumeGuarantee}
\bibinfo{author}{Amir \surnamestart Pnueli\surnameend} (\bibinfo{year}{1985}):
  \emph{\bibinfo{title}{In Transition From Global to Modular Temporal Reasoning
  about Programs}}.
\newblock In \bibinfo{editor}{Krzysztof~R. \surnamestart Apt\surnameend},
  editor: {\slshape \bibinfo{booktitle}{Logics and Models of Concurrent
  Systems}}, \bibinfo{publisher}{Springer Berlin Heidelberg},
  \bibinfo{address}{Berlin, Heidelberg}, pp. \bibinfo{pages}{123--144}.

\bibitemdeclare{misc}{rupel2013operadtemporalwiringdiagrams}
\bibitem{rupel2013operadtemporalwiringdiagrams}
\bibinfo{author}{Dylan \surnamestart Rupel\surnameend} \&
  \bibinfo{author}{David~I. \surnamestart Spivak\surnameend}
  (\bibinfo{year}{2013}): \emph{\bibinfo{title}{The operad of temporal wiring
  diagrams: formalizing a graphical language for discrete-time processes}}.
\newblock \eprint{1307.6894}.

\bibitemdeclare{article}{RUTTEN20003}
\bibitem{RUTTEN20003}
\bibinfo{author}{J.J.M.M. \surnamestart Rutten\surnameend}
  (\bibinfo{year}{2000}): \emph{\bibinfo{title}{Universal coalgebra: a theory
  of systems}}.
\newblock {\slshape \bibinfo{journal}{Theoretical Computer Science}}
  \bibinfo{volume}{249}(\bibinfo{number}{1}), pp. \bibinfo{pages}{3--80},
  \doi{https://doi.org/10.1016/S0304-3975(00)00056-6}.
\newblock
  \urlprefix\url{https://www.sciencedirect.com/science/article/pii/S0304397500000566}.
\newblock \bibinfo{note}{Modern Algebra}.

\bibitemdeclare{misc}{schultz2019dynamicalsystemssheaves}
\bibitem{schultz2019dynamicalsystemssheaves}
\bibinfo{author}{Patrick \surnamestart Schultz\surnameend},
  \bibinfo{author}{David~I. \surnamestart Spivak\surnameend} \&
  \bibinfo{author}{Christina \surnamestart Vasilakopoulou\surnameend}
  (\bibinfo{year}{2019}): \emph{\bibinfo{title}{Dynamical Systems and
  Sheaves}}.
\newblock \eprint{1609.08086}.

\bibitemdeclare{article}{sontagSmoothStabilizationImplies1989}
\bibitem{sontagSmoothStabilizationImplies1989}
\bibinfo{author}{Eduardo~D. \surnamestart Sontag\surnameend}
  (\bibinfo{year}{1989}): \emph{\bibinfo{title}{Smooth Stabilization Implies
  Coprime Factorization}}.
\newblock {\slshape \bibinfo{journal}{IEEE transactions on automatic control}}
  \bibinfo{volume}{34}(\bibinfo{number}{4}), pp. \bibinfo{pages}{435--443}.
\newblock
  \urlprefix\url{https://www.academia.edu/download/48193612/Smooth_Stabilization_Implies_Coprime_Fac20160820-11625-15xs30o.pdf}.

\bibitemdeclare{inbook}{sontagInputStateStability2008}
\bibitem{sontagInputStateStability2008}
\bibinfo{author}{Eduardo~D. \surnamestart Sontag\surnameend}
  (\bibinfo{year}{2008}): \emph{\bibinfo{title}{Input to {{State Stability}}:
  {{Basic Concepts}} and {{Results}}}}, pp. \bibinfo{pages}{163--220}.
\newblock \bibinfo{volume}{1932}, \bibinfo{publisher}{Springer Berlin
  Heidelberg}, \bibinfo{address}{Berlin, Heidelberg},
  \doi{10.1007/978-3-540-77653-6_3}.
\newblock \urlprefix\url{http://link.springer.com/10.1007/978-3-540-77653-6_3}.

\bibitemdeclare{misc}{spivak2022generalizedlenscategoriesfunctors}
\bibitem{spivak2022generalizedlenscategoriesfunctors}
\bibinfo{author}{David~I. \surnamestart Spivak\surnameend}
  (\bibinfo{year}{2022}): \emph{\bibinfo{title}{Generalized Lens Categories via
  functors $\mathcal{C}^{\rm op}\to \mathsf{Cat}$}}.
\newblock \eprint{1908.02202}.

\bibitemdeclare{misc}{vagner2015algebrasopendynamicalsystems}
\bibitem{vagner2015algebrasopendynamicalsystems}
\bibinfo{author}{Dmitry \surnamestart Vagner\surnameend},
  \bibinfo{author}{David~I. \surnamestart Spivak\surnameend} \&
  \bibinfo{author}{Eugene \surnamestart Lerman\surnameend}
  (\bibinfo{year}{2015}): \emph{\bibinfo{title}{Algebras of Open Dynamical
  Systems on the Operad of Wiring Diagrams}}.
\newblock \eprint{1408.1598}.

\bibitemdeclare{inproceedings}{ViswanathanViswanathan2001}
\bibitem{ViswanathanViswanathan2001}
\bibinfo{author}{Mahesh \surnamestart Viswanathan\surnameend} \&
  \bibinfo{author}{Ramesh \surnamestart Viswanathan\surnameend}
  (\bibinfo{year}{2001}): \emph{\bibinfo{title}{Foundations for Circular
  Compositional Reasoning}}.
\newblock In: {\slshape \bibinfo{booktitle}{Proceedings of the 28th
  International Colloquium on Automata, Languages and Programming (ICALP
  2001)}}, {\slshape \bibinfo{series}{Lecture Notes in Computer Science}}
  \bibinfo{volume}{2076}, \bibinfo{publisher}{Springer}, pp.
  \bibinfo{pages}{142--153}, \doi{10.1007/3-540-48224-5_68}.

\end{thebibliography}

\appendix

\section{Construction of the double category $\CertLens_\R$}\label[appendix]{app:lyapunov-dbl-cat-const}

We construct the lax double functor $\Cert : \Lens\BunMan^{\mathsf{op}} \to \Span(\Cat)$ whose double Grothendieck construction (as in Section 3 of \cite{cruttwell2022doublefibrations}) is $\CertLens_{\R}$.

\begin{definition}
	We define a lax symmetric monoidal lax double functor
  \begin{equation}\Cert : \Lens\BunMan^{\mathsf{op}} \to \Span(\Cat)\end{equation}
  as follows:
	\begin{enumerate}
		\item $\Cert\lens{F}{B} := \BunMan\left(\lens{F}{B},\, \lens{\R}{\R} \right)$ is the set of bundle maps from $\pi_1 : B \times F \to B$ to $\pi_1 : \R \times \R \to \R$, ordered lexicographically:
    \begin{equation}\label{eq:bund-preds}
      \begin{tikzcd}[ampersand replacement=\&,column sep=scriptsize, row sep = tiny]
        \&\& {X' \times F'} \& \\
        {X \times F} \&\&\& {T\R} \\
        \&\& {X'} \\
        X \&\&\& \R
        \arrow[""{name=0, anchor=center, inner sep=0}, "{\alpha'}"{description}, curve={height=-6pt}, from=1-3, to=2-4]
        \arrow[two heads, from=1-3, to=3-3]
        \arrow["{f_\sharp}"{description}, from=2-1, to=1-3]
        \arrow[""{name=1, anchor=center, inner sep=0}, "\alpha"{description}, curve={height=6pt}, from=2-1, to=2-4]
        \arrow[two heads, from=2-1, to=4-1]
        \arrow[two heads, from=2-4, to=4-4]
        \arrow[""{name=2, anchor=center, inner sep=0}, "{\gamma'}"{description}, curve={height=-6pt}, from=3-3, to=4-4]
        \arrow["f"{description}, from=4-1, to=3-3]
        \arrow[""{name=3, anchor=center, inner sep=0}, "\gamma"{description}, curve={height=6pt}, from=4-1, to=4-4]
        \arrow["{\geq_\lexicographic}"{marking, allow upside down, pos=0.4}, shift left=3, draw=none, from=1, to=0]
        \arrow["\geq"{marking, allow upside down, pos=0.4}, shift left=3, draw=none, from=3, to=2]
      \end{tikzcd}
    \end{equation}
    These maps are given by pairs $\gamma : B\to \R$ and $\alpha : B \times F \to \R$, and the ordering has $(\gamma, \alpha) \to (\gamma', \alpha')$ when for all $x$, $\gamma(x) \geq \gamma'(x)$ and for all $x$ and $y$, if $\gamma(x) = \gamma'(x)$ then $\alpha(x, y) \geq \alpha'(x, y)$.

		\item For a bundle map $\lens{f_{\sharp}}{f} : \lens{F}{B} \rightrightarrows \lens{F'}{B'}$, $\Cert$ acts by precomposition. Explicitly, this sends $\lens{\alpha}{\gamma}$ to $\lens{(x, y) \mapsto \alpha(f(x), f_{\sharp}(x, y))}{x \mapsto \gamma(f(x))}$. This is evidently functorial.

		\item For a lens $\lens{w^{\sharp}}{w} : \lens{F}{B} \leftrightarrows \lens{F'}{G'}$, we define $\Cert\lens{w^{\sharp}}{w} \subseteq \K^0_{\infty} \times \Cert\lens{F}{B} \times \Cert\lens{F'}{B'}$ to be the full subcategory consisting of triples $\left(\kappa,\, \lens{\alpha}{\gamma},\, \lens{\alpha'}{\gamma'}\right)$ for which the lens is certified:
		\begin{equation}\label{eqn:certified.quantitative.lens}
			\lens{w^{\sharp}}{w} \vDash \lens{\alpha}{\gamma} \overset{\kappa}{\lensto} \lens{\alpha'}{\gamma'} :\iff \begin{cases}
				\alpha'(w(x), y) + \kappa(\gamma(x)) \geq \alpha(x, w^{\sharp}(x, y)) \\
				\gamma(x) \geq \gamma'(w(x))
			\end{cases}
		\end{equation}
		We refer to $\kappa$ as the \emph{slack}. Slacks are considered a category by $\kappa_1 \to \kappa_2$ when $\kappa_1 \geq \kappa_2$.
		\item For a square as below left, implying the equations below right:
		\begin{equation}
\begin{tikzcd}
	{\lens{F_1}{B_1}} & {\lens{F_3}{B_3}} \\
	{\lens{F_2}{B_2}} & {\lens{F_4}{B_4}}
	\arrow[shift right, from=1-1, to=1-2]
	\arrow[shift left, from=1-1, to=2-1]
	\arrow["{\lens{f_{\sharp}}{f}}"', shift right, from=1-1, to=2-1]
	\arrow["{\lens{u^{\sharp}}{u}}"', shift right, from=1-2, to=1-1]
	\arrow["{\lens{g_{\sharp}}{g}}", shift left, from=1-2, to=2-2]
	\arrow[shift right, from=1-2, to=2-2]
	\arrow["{\lens{w^{\sharp}}{w}}"', shift right, from=2-1, to=2-2]
	\arrow[shift right, from=2-2, to=2-1]
\end{tikzcd}\quad\quad
\begin{cases}
	w^{\sharp}(f(x), g_{\sharp}(u(x), y)) = f_{\sharp}(x, u^{\sharp}(x, y)) \\
	w(f(x)) = g(u(x))
\end{cases}
		\end{equation}
		we need to check that if $\lens{w^{\sharp}}{w} \vDash \lens{\alpha}{\gamma} \overset{\kappa}{\lensto} \lens{\alpha'}{\gamma'}$, then $\lens{u^{\sharp}}{u} \vDash \lens{f_{\sharp}}{f}^{\ast}\lens{\alpha}{\gamma} \overset{\kappa}{\lensto} \lens{g_{\sharp}}{g}^{\ast}\lens{\alpha'}{\gamma'}$; in other words, that
		\begin{equation}
			\begin{cases}
				 \alpha'(g(u(x)), g_{\sharp}(u(x), y)) + \kappa(\gamma(f(x))) \geq \alpha(f(x), f_{\sharp}(x, u^{\sharp}(f(x), y)))\\
				\gamma(f(x)) \geq \gamma'(g(u(x)))
			\end{cases}
		\end{equation}
		These follow formally from the equations governing the square and the assumption that $\lens{w^{\sharp}}{w} \vDash \lens{\alpha}{\gamma} \leftrightarrows \lens{\alpha'}{\gamma'}$, substituting in $x \leftarrow f(x)$ and $y \leftarrow g_{\sharp}(u(x), y)$ to \eqref{eqn:certified.quantitative.lens}.
		\item We then need to supply the unitor and compositor. We quite trivially have $\lens{\pi_2}{\id} \vDash \lens{\alpha}{\gamma} \overset{0}{\leftrightarrows} \lens{\alpha}{\gamma}$, which handles the unitor. For the compositor, we add the slacks; suppose that $\lens{w^{\sharp}_1}{w_1} \vDash \lens{\alpha_1}{\gamma_1} \overset{\kappa_1}{\leftrightarrows} \lens{\alpha_2}{\gamma_2}$ and $\lens{w^{\sharp}_2}{w_2} \vDash \lens{\alpha_2}{\gamma_2} \overset{\kappa_2}{\leftrightarrows} \lens{\alpha_3}{\gamma_3}$, seeking to show that $\lens{w^{\sharp}_2}{w_2} \circ \lens{w^{\sharp}_1}{w_1} \vDash \lens{\alpha_1}{\gamma_1} \overset{\kappa_1 + \kappa_2}{\leftrightarrows} \lens{\alpha_2}{\gamma_2}$. We may prove this as follows:
		\begin{equation}
\begin{cases}
	\begin{aligned}
	\alpha_3(w_2(w_1(x)), y) +  (\kappa_1 + \kappa_2)(\gamma_1(x)) &\geq \alpha_3(w_2(w_1(x)), y) + \kappa_1(\gamma_1(x)) + \kappa_2(\gamma_2(w_1(x)))  \\
	&\geq \alpha_2(w_1(x), w_2^{\sharp}(w_1(x), y)) + \kappa_2(\gamma_2(w_1(x))) \\
	 &\geq \alpha_1(x, w_1^{\sharp}(x, w^{\sharp}_2(w(x), y)))\\
	\end{aligned}\\
	\gamma_1(x) \geq \gamma_2(w_1(x)) \geq \gamma_3(w_2(w_1(x)))
\end{cases}
		\end{equation}
		\item The laxitor $\Cert \lens{F}{B} \times \Cert \lens{F'}{B'} \to \Cert \lens{F \times F'}{B \times B'}$ is given by componentwise addition:
		$$\left(\lens{\alpha}{\gamma}, \lens{\alpha'}{\gamma'}\right) \mapsto \lens{\alpha + \alpha'}{\gamma + \gamma'}.$$
		On lenses, we also add the slacks. This is evidently commutative, associative, and monotone. The identity is $\lens{0}{0} : \lens{1}{1} \rightrightarrows \lens{\R}{\R}$. It interchanges with compositors because both add the slacks and addition is commutative.
		\end{enumerate}
\end{definition}

We see thus that this double category is very close to be a double category of lenses, except that:
\begin{enumerate}
  \item The lexicographic order prevents bundle predicates from being fibred over predicates `in the correct way'. The map of posets $(\R \times \R, \geq_\lexicographic) \to (\R, \geq)$ is a Grothendieck fibration but pullback along a strict inequality $a \gneq b$ sends every pair $(b,b')$ to $(a,0)$.
  This is not the behaviour we want, what we want is to get $(a,b')$ or at most $(a, \eta(b'))$ where $\eta$ is parallel transport for a chosen connection (remember $b'$ is a tangent vector at $b$, so one must explicitly move it to a different fiber).
  \item One could fix the first problem by using the product order instead of lexicographic, but this in turns makes \cref{def:cert-ode} unsatisfactory, since $\varphi \mapsto (\varphi,\de \varphi)$ is not a monotone section of $(\R \times \R,\ \geq \times \geq) \to (\R, \geq)$.
  \item Finally, \emph{slack} is needed, in practice, for good compositional properties, and that cannot be accounted for in any way in a lens construction.
\end{enumerate}

\section{Proofs of \cref{sec:gen.moore.machine}}
\label[appendix]{app:fibs-of-tans}
We will make use of the theory of \emph{enhanced sketches} \cite{arkor2024enhanced2categoricalstructurestwodimensional}. Example 5.15 of \emph{ibid.} gives the enhanced sketch $\mathbb{T}_{\mathsf{fib}}$ for tight fibrations, so that models in the (chordate) enhanced 2-category of categories $\mathcal{M}\mathsf{od}_{s, l}(\mathbb{T}_{\mathsf{fib}}, \Cat) \cong \Fib$ are fibrations (lax- and pseudo-morphisms of fibrations coincide --- both are cartesian functors --- see Example 4.10 of \emph{ibid.}). We may define the sketch $\mathbb{T}_{\mathsf{tan}}$ of tangencies to contain the sketch for tight fibrations together with a \emph{loose} section; we then have $\Tangency \cong \mathcal{M}\mathsf{od}_{s, c}(\mathbb{T}_{\mathsf{Tan}}, \Cat)$, with maps given by the (strictly commuting) cartesian squares which are \emph{colax} on sections. We may therefore use \emph{symmetry of internalization} (Theorem 7.5 of \cite{arkor2024enhanced2categoricalstructurestwodimensional}) to prove \cref{lem:fibration.of.tangencies}.

\begin{proof}[Proof of \cref{lem:fibration.of.tangencies}]
By \emph{symmetry of internalization}, we have
\begin{equation}
\Fib(\Tangency) =
\Mod{l}(\mathbb{T}_{\mathsf{fib}}, \Mod{c}(\mathbb{T}_{\mathsf{tan}}, \Cat))
\cong
\Mod{c}(\mathbb{T}_{\mathsf{tan}}, \Mod{l}(\mathbb{T}_{\mathsf{fib}}, \Cat)) = \Tangency(\Fib)
\end{equation}
This shows that a tight fibration internal to the enhanced 2-category of tangencies is equivalently a tangency internal to the enhanced 2-category of fibrations. The tight morphisms of tangencies are the strict ones; the tight morphisms of fibrations strictly preserve the cleavage. A tangency internal to fibrations is a diagram of the form:
	\begin{equation}
\begin{tikzcd}[sep=small]
	{\acat{PB}} & {\acat{PE}} & {\acat{PB}} \\
	{\acat{B}} & {\acat{E}} & {\acat{B}}
	\arrow["\Lift", from=1-1, to=1-2]
	\arrow["{p_{\acat{B}}}"', two heads, from=1-1, to=2-1]
	\arrow["{\acat{P}\pi}", two heads, from=1-2, to=1-3]
	\arrow["{p_{\acat{E}}}"', two heads, from=1-2, to=2-2]
	\arrow["{p_{\acat{B}}}"', two heads, from=1-3, to=2-3]
	\arrow["T"', from=2-1, to=2-2]
	\arrow["\pi"', two heads, from=2-2, to=2-3]
\end{tikzcd}
	\end{equation}
	where the left square is cartesian (loose) and the right strictly preserves chosen cartesian lifts (tight); it remains to show that asking for $(\pi, \acat{P}\pi)$ to be a fibration internal to $\Fib$ is merely asking that it be componentwise a fibration. This is a classical fact; it is proved for morphisms of fibrations over a fixed base in Theorem 4.16 of \cite{HERMIDA199983} (attributed to B{\'e}nabou therein), and we may reduce our case to this by noting that a square is cartesian when the gap map into the pullback over the map on bases is.
\end{proof}

We check that sets and boolean valued predicates gives an example of this notion.

\begin{proof}[Proof of \cref{lem:example.fibration.tangency}]
The first thing to check is that $\mathsf{cod} : \thecat{PSet}^{\downarrow} \to \thecat{PSet}$ is a fibration; equivalently, this means that $\thecat{PSet}$ has pullbacks. Since $\Set$ has pullbacks and $\Set(X, \mathsf{bool})$ has pullbacks (given by conjunction $\wedge$) for all $X$, and since these are preserved under reindexing (precomposition), $\thecat{PSet}$ has pullbacks constructed by
\begin{equation}
\begin{tikzcd}[sep=small]
	{(A \times_C B, \pi_1^{\ast}\alpha \wedge \pi_2^{\ast} \beta)} & {(B, \beta)} \\
	{(A, \alpha)} & {(C, \gamma)}
	\arrow[from=1-1, to=1-2]
	\arrow["{\pi_1}"', from=1-1, to=2-1]
	\arrow["\lrcorner"{anchor=center, pos=0.125}, draw=none, from=1-1, to=2-2]
	\arrow["g", from=1-2, to=2-2]
	\arrow["f"', from=2-1, to=2-2]
\end{tikzcd}
\end{equation}
That $p^{\downarrow}$ is a fibration follows by Proposition 4.4 of \cite{HERMIDA199983}. We then need to check that $(\mathsf{cod}, \mathsf{cod}) : p^{\downarrow} \to p$ strictly preserves chosen lifts; but chosen lifts in each case are given by precomposition, so this is easily satisfied. Finally, we need to show that $\Lift (S, \varphi) := ((\pi_1 : (S \times S, \varphi \times \varphi) \to (S, \varphi)))$ is cartesian; but if $f : S' \to S$, then the lift is $f : (S', \varphi \circ f) \to (S, \varphi)$ and we have $(\varphi \circ f) \times (\varphi \circ f) = (\varphi \times \varphi) \circ (f \times f)$.

\end{proof}

Finally, we come to \cref{thm:moore.preserves.fibrations}. The key observation is that $\Lens : \Fib \to \Dbl$ preserves tight fibrations because it preserves slicing; from this the extension to $\Moore$ is straightforward.

\begin{proof}[Proof of \cref{thm:moore.preserves.fibrations}]
We first show that $\Lens : \Fib \to \Dbl$ preserves tight fibrations.
Since tight fibrations are given by a sketch only marking colax limits of tight arrows, it will suffice to show that $\Lens$ preserves colax limits of tight arrows (also known as \emph{slices} $X \downarrow f$). The 2-functor $\Lens$ is a composite
\begin{equation}
\Fib \xto{(\mathsf{dom}(-), (\mathsf{vert}, \mathsf{cart}))} \AdTriple \xto{\mathbb{S}\mathbf{pan}} \Dbl
\end{equation}
given first by sending a fibration $\pi : \acat{E} \to \acat{B}$ to the \emph{adequate triple} \cite{haugseng2023twovariablefibrationsfactorisationsystems} $(\acat{E}, (\mathsf{vert}, \mathsf{cart}))$, and then taking the span construction of adequate triples (see Section 2.3 of \cite{libkind2025doubleoperadictheorysystems} for a review). In Theorem 2.13 of \cite{haugseng2023twovariablefibrationsfactorisationsystems}, $\mathbb{S}\mathbf{pan} : \AdTriple \to \Dbl$ is shown to be a \emph{nerve} against the cosimplicial adequate triple $\acat{Tw}^r : \Delta \to \AdTriple$ sending each $n$-simplex to its twisted arrow category (see Example 2.9 of \emph{ibid.}); here, we take double categories $\Dbl \hookrightarrow \Cat^{\Delta^{\mathsf{op}}}$ to be simplicial categories satisfying the Segal condition. For this reason, $\mathbb{S}\mathbf{pan}$ preserves all 2-limits.

It therefore suffices to show that the functor sending $\pi : \acat{E} \to \acat{B}$ to $(\acat{E}, (\mathsf{vert}, \mathsf{cart}))$ preserves colax limits of tight arrows. By Proposition 4.14 of \cite{HERMIDA199983}, $\Fib$ has colax limits of arrows (in fact, all comma objects), and they are constructed componentwise in $\Cat$; moreover, a map in the slice is vertical when it is componentwise vertical, and cartesian when it is componentwise cartesian.
It therefore suffices to show that $\AdTriple$ also has colax limits of arrows constructed in $\Cat$. We address this in \cref{lem:slices.of.adequate.triples}.

Finally, we must show that $\Moore$ also preserves fibrations. Fibrations of loose right modules are constructed componentwise in $\Cat$, just like for $\Dbl$ (both double categories and loose right modules of double categories are sketchable). Suppose we have a fibration $(p_{\acat{E}}, p_{\acat{B}}) : (\acat{P}\pi, \Lift) \to (\pi, T)$ of tangencies. We have seen that $(p_{\acat{E}}, p_{\acat{B}})$ induces a fibration $p_{\acat{E},\star} : \Lens(\acat{P}\pi) \to \Lens(\pi)$; it remains to show that this extends to $\Moore(\acat{P}\pi, \Lift) \to \Moore(\pi, T)$, and for this it suffices to show on the categories of Moore machines. But a map of $(\pi, T)$-Moore machines is just a square of lenses whose tight domain is of the form $T\sigma$; it therefore suffices to note that a lift of such a square has tight domain $\Lift (\hat{\sigma})$, where $\hat{\sigma}$ is the lift of $\sigma$ along $p_{\acat{B}}$. This follows from the assumption that $\Lift$ preserves cartesian lifts.
 \end{proof}

\begin{lemma}\label{lem:slices.of.adequate.triples}
The 2-category $\AdTriple$ of adequte triples (see Definition 2.9 of \cite{libkind2025doubleoperadictheorysystems}) has all colax limits of arrows, and they are constructed in $\Cat$.
\end{lemma}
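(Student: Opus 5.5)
Recall that an adequate triple is a category $\acat{C}$ with two wide subcategories $\acat{C}_L, \acat{C}_R$ (containing all isomorphisms). Pullbacks of $L$-maps along $R$-maps exist and stay in $L$, and conversely. A morphism of adequate triples is a functor preserving both classes and these pullbacks; 2-cells are natural transformations, whose components we take to be arbitrary or constrained according to the conventions of Definition 2.9 of \cite{libkind2025doubleoperadictheorysystems}. Given a tight arrow $F : (\acat{C}, L, R) \to (\acat{D}, L, R)$, I would construct the colax limit in $\Cat$, namely the comma category $\acat{C} \downarrow F$ (or $F \downarrow \acat{D}$, according to the orientation of the colax cone). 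Its objects are triples $(c, d, \phi : d \to Fc)$ and its morphisms are pairs $(g, h)$ making the evident square commute. I would equip it with the componentwise triple: $(g,h)$ is in $L$ (resp. $R$) iff $g$ and $h$ are both in $L$ (resp. $R$). This mirrors the componentwise description of vertical and cartesian maps in slices of $\Fib$ quoted from Proposition 4.14 of \cite{HERMIDA199983}.

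The first step is to show this is an adequate triple. Isomorphisms and composites are clearly preserved componentwise. For pullbacks, take an $L$-map $(g,h) : (c_1,d_1,\phi_1) \to (c_3,d_3,\phi_3)$ and an $R$-map $(g',h') : (c_2,d_2,\phi_2) \to (c_3,d_3,\phi_3)$. Form $c = c_1 \times_{c_3} c_2$ in $\acat{C}$ and $d = d_1 \times_{d_3} d_2$ in $\acat{D}$, both of which exist by adequacy. Since $F$ preserves these pullbacks, $Fc \cong Fc_1 \times_{Fc_3} Fc_2$, and the structure map $d \to Fc$ is then induced by the universal property from $\phi_1 \pi_1$ and $\phi_2 \pi_2$. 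The projections have components in $L$ or $R$ as required, again by adequacy. It remains to check that this square is a pullback in the comma category, which is a routine universal-property check: a cone into the comma category gives cones in each component, and compatibility with the structure maps follows from the uniqueness clause of the pullback $Fc$.

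The second step is to check that the projections to $\acat{C}$ and $\acat{D}$, together with the comparison 2-cell $\phi$, form a colax cone in $\AdTriple$. The projections are morphisms of triples precisely because pullbacks were built componentwise. The comparison cell must be a legitimate 2-cell, and I expect this is the main obstacle. If Definition 2.9 restricts 2-cell components (for instance, requiring them to be $R$-maps, or requiring naturality squares to be pullbacks), then the comma category must be cut down to objects with $\phi$ in the appropriate class. Choosing the correct restriction to match $\Fib$ is the delicate point; I would first try no restriction, and only impose one if the 2-cell definition demands it.

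The final step is the universal property. Given a triple $\acat{X}$ with a colax cone $(P, Q, \theta)$, the induced functor $\acat{X} \to \acat{C}\downarrow F$, $x \mapsto (Px, Qx, \theta_x)$, is unique in $\Cat$. It preserves $L$ and $R$ because $P$ and $Q$ do. It preserves the relevant pullbacks because pullbacks in the comma triple are detected componentwise, which was established in the first step. The two-dimensional part is inherited from $\Cat$ in the same way, so the colax limit is computed in $\Cat$ as claimed.
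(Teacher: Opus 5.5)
Your proposal is correct and follows essentially the same route as the paper. Both arguments put the componentwise $L$ and $R$ classes on the comma category ($\acat{B}\downarrow\pi$ in the paper's notation; what you describe is $\acat{D}\downarrow F$, not $\acat{C}\downarrow F$), build $L$--$R$ pullbacks componentwise using that the functor preserves them, and get the universal property from the structure being componentwise. Your worry about restricted 2-cells is unnecessary: the paper takes 2-cells of $\AdTriple$ to be arbitrary natural transformations $\alpha : f \Rightarrow \pi g$, so the unrestricted comma is already the right object.
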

\begin{proof}
Let $\pi : (\acat{E}, (L_{\acat{E}}, R_{\acat{E}})) \to (\acat{B}, (L_{\acat{B}}, R_{\acat{B}}))$ be a morphism of adequate triples. We will show that the colax limit (i.e. slice or comma) $\acat{B} \downarrow \pi$ of the functor underlying $\pi$ may be endowed with the structure of an adequate triple, making it the colax limit of $\pi$ in $\AdTriple$. This is straightforward by taking the structure componentwise.

A morphism in $\acat{B} \downarrow \pi$ is a pair of morphisms in $\acat{B}$ and $\acat{E}$ such that a square commutes in $\acat{B}$; we'll take the left class to consist of those pairs where both maps are in their respective left classes, and similarly for the right class. This endows the slice $\acat{B} \downarrow \pi$ with the structure of an adequate triple by taking pullbacks componentwise, since $\pi$ preserves $L$-$R$ pullbacks; the pullback of $(\ell_0, \ell_1)$ along $(r_0, r_2)$ is given by:
\begin{equation}%
\begin{tikzcd}[sep=small]
	{B' \times_B B''} && {\pi(E' \times_E E'')} & \\
	& {B''} && {\pi E''} \\
	{B'} && {\pi E'} \\
	& B && {\pi E}
	\arrow["{x' \times_x x''}", dashed, from=1-1, to=1-3]
	\arrow[from=1-1, to=2-2]
	\arrow[from=1-1, to=3-1]
	\arrow["\lrcorner"{anchor=center, pos=0.125}, draw=none, from=1-1, to=4-2]
	\arrow[from=1-3, to=2-4]
	\arrow[from=1-3, to=3-3]
	\arrow["\lrcorner"{anchor=center, pos=0.125}, draw=none, from=1-3, to=4-4]
	\arrow["{x''}"{pos=0.3}, from=2-2, to=2-4]
	\arrow["{r_0}"{pos=0.7}, from=2-2, to=4-2]
	\arrow["{\pi r_1}"{pos=0.7}, from=2-4, to=4-4]
	\arrow["{x'}"{pos=0.7}, from=3-1, to=3-3]
	\arrow["{\ell_0}"', from=3-1, to=4-2]
	\arrow["{\pi \ell_1}"{pos=0.4}, from=3-3, to=4-4]
	\arrow["x"', from=4-2, to=4-4]
\end{tikzcd}
 \end{equation}
To show that this is the colax limit of $\pi$ in $\AdTriple$, we must show that if $\alpha : f \Rightarrow \pi g$ is a natural transformation and $f : (\acat{X}, (L_{\acat{X}}, R_{\acat{X}})) \to
(\acat{B}, (L_{\acat{B}}, R_{\acat{B}}))$ and $g :(\acat{X}, (L_{\acat{X}}, R_{\acat{X}})) \to
(\acat{E}, (L_{\acat{E}}, R_{\acat{E}}))$ are morphisms of adequate triples, then the induced functor $\langle \alpha \rangle : \acat{X} \to \acat{B} \downarrow \pi$ is also a morphism of adequate triples; but by assumption $f$ and $g$ preserve both classes and $L$-$R$ pullbacks, and the adequate triple structure of $\acat{B} \downarrow \pi$ was chosen to be componentwise.
\end{proof}

\end{document}